\newcommand{\comment}[1]
{\ifthenelse{\boolean{commentson}}
   {{\par\noindent\mbox{}{\small[ *** #1 ]\par}\noindent\par}}{}}
\begin{document}

\newcommand{\bull}{\rule{.85ex}{1ex} \par \bigskip}
\newenvironment{sketch}{\noindent {\bf Proof (sketch):\ }}{\hfill \bull}
\newtheorem{exmp}[theorem]{Example}
\newtheorem{notation}[theorem]{Notation}
\newtheorem{observation}[theorem]{Observation}

\newcommand{\QCSP}[1]{\mbox{\rm QCSP$(#1)$}}
\newcommand{\CSP}[1]{\mbox{\rm CSP$(#1)$}}
\newcommand{\MCSP}[1]{\mbox{{\sc Max CSP}$(#1)$}}
\newcommand{\wMCSP}[1]{\mbox{\rm weighted Max CSP$(#1)$}}
\newcommand{\cMCSP}[1]{\mbox{\rm cw-Max CSP$(#1)$}}
\newcommand{\tMCSP}[1]{\mbox{\rm tw-Max CSP$(#1)$}}
\renewcommand{\P}{\mbox{\bf P}}
\newcommand{\G}[1]{\mbox{\rm I$(#1)$}}
\newcommand{\NE}[1]{\mbox{$\neq_{#1}$}}

\newcommand{\MCol}[1]{\mbox{\sc Max $#1$-Col}}

\newcommand{\NP}{\mbox{\bf NP}}
\newcommand{\NL}{\mbox{\bf NL}}
\newcommand{\PO}{\mbox{\bf PO}}
\newcommand{\NPO}{\mbox{\bf NPO}}
\newcommand{\APX}{\mbox{\bf APX}}
\newcommand{\Aut}{\mbox{\rm Aut}}
\newcommand{\bound}{\mbox{\rm -$B$}}

\newcommand{\GIF}[3]{\ensuremath{h_\{{#2},{#3}\}^{#1}}}

\newcommand{\Spmod}{\mbox{\rm Spmod}}
\newcommand{\Sbmod}{\mbox{\rm Sbmod}}

\newcommand{\Inv}[1]{\mbox{\rm Inv($#1$)}}
\newcommand{\Pol}[1]{\mbox{\rm Pol($#1$)}}
\newcommand{\sPol}[1]{\mbox{\rm s-Pol($#1$)}}

\newcommand{\un}{\underline}
\newcommand{\ov}{\overline}
\def\ar{\hbox{ar}}
\def\vect#1#2{#1 _1\zdots #1 _{#2}}
\def\zd{,\ldots,}
\let\sse=\subseteq
\let\la=\langle
\def\lla{\langle\langle}
\let\ra=\rangle
\def\rra{\rangle\rangle}
\let\vr=\varrho
\def\vct#1#2{#1 _1\zd #1 _{#2}}
\newcommand{\va}{{\bf a}}
\newcommand{\vb}{{\bf b}}
\newcommand{\vc}{{\bf c}}
\newcommand{\bx}{{\bf x}}
\newcommand{\by}{{\bf y}}
\def\Z{{\bur Z^+}}
\def\R{{\bur R}}
\def\D{{\cal D}}
\def\F{{\cal F}}
\def\I{{\cal I}}
\def\C{{\cal C}}
\def\U{{\cal U}}
\def\K{{\cal K}}
\def\Lat{{\cal L}}

\def\2mat#1#2#3#4#5#6#7#8{
\begin{array}{c|cc}
$~$ & #3 & #4\\
\hline
#1 & #5& #6\\
#2 & #7 & #8 \end{array}}

\renewcommand{\phi}{\varphi}
\renewcommand{\epsilon}{\varepsilon}

\def\tup#1{\mathchoice{\mbox{\boldmath$\displaystyle#1$}}
{\mbox{\boldmath$\textstyle#1$}}
{\mbox{\boldmath$\scriptstyle#1$}}
{\mbox{\boldmath$\scriptscriptstyle#1$}}}
\newcommand{\draft}{\begin{center}\huge Draft!!! \end{center}}
\newcommand{\void}{\makebox[0mm]{}}     


\renewcommand{\text}[1]{\mbox{\rm \,#1\,}}        


\renewcommand{\emptyset}{\varnothing}  
\newcommand{\union}{\cup}               
\newcommand{\intersect}{\cap}           
\newcommand{\setdiff}{-}                
\newcommand{\compl}[1]{\overline{#1}}   
\newcommand{\card}[1]{{|#1|}}           
\newcommand{\set}[1]{\{{#1}\}} 
\newcommand{\st}{\ |\ }                 
\newcommand{\suchthat}{\st}             
\newcommand{\cprod}{\times}             
\newcommand{\powerset}[1]{{\bf 2}^{#1}} 

\newcommand{\tuple}[1]{\langle{#1}\rangle}  
\newcommand{\seq}[1]{\langle #1 \rangle}
\newcommand{\emptyseq}{\seq{}}
\newcommand{\floor}[1]{\left\lfloor{#1}\right\rfloor}
\newcommand{\ceiling}[1]{\left\lceil{#1}\right\rceil}

\newcommand{\map}{\rightarrow}
\newcommand{\fncomp}{\!\circ\!}         

\newcommand{\transclos}[1]{#1^+}
\newcommand{\reduction}[1]{#1^-}        


\newcommand{\perfimp}{\stackrel{p}{\Longrightarrow}}

\newcommand{\ie}{{\em ie.}}                
\newcommand{\eg}{{\em eg.}}
\newcommand{\paper}{paper}                

\newcommand{\emdef}{\em}                   
\newcommand{\rinterpretation}{${\mathbb R}$-interpretation}
\newcommand{\rmodel}{${\mathbb R}$-model}
\newcommand{\transp}{^{\rm T}}

\newcommand{\unprint}[1]{}
\newcommand{\blankline}{$\:$}

\newcommand{\prob}[1]{{\sc #1}}

\newcommand{\Solv}{{\it TSolve}}
\newcommand{\Neg}{{\it Neg}}
\newcommand{\logname}{XX}

\newcommand{\props}{{\it props}}
\newcommand{\rels}{{\it rels}}
\newcommand{\deduce}{\vdash_p}

\newcommand{\pform}{{\rm Pr}}
\newcommand{\axform}{{\rm AX}}
\newcommand{\axset}{{\bf AX}}
\newcommand{\resdeduce}{\vdash_{\rm R}}
\newcommand{\resaxdeduce}{\vdash_{\rm R,A}}

\newcommand{\cmis}{{\em \#mis}}
\newcommand{\combine}{{\em comb}}

\newcommand{\xcsp}{{\sc X-Csp}}
\newcommand{\csp}{{\sc Csp}}

\newcommand{\cc}[1]{\textnormal{\textbf{#1}}} 
\newcommand{\opt}[0]{\textrm{{\sc opt}}}

\newcommand{\MC}{mc}
\newcommand{\Ha}{\textrm{\textit{H\aa}}}
\renewcommand{\atop}[2]{\genfrac{}{}{0pt}{}{#1}{#2}}
\newcommand{\GHAT}{{\cal G_\equiv}}
\newcommand{\HOMEQ}{\equiv}

\author{Tommy F\"arnqvist \and Peter Jonsson \and Johan Thapper}
\institute{Department of Computer and Information Science\\
Link\"opings universitet\\
SE-581 83 Link\"oping, Sweden\\
\email{\{tomfa, petej, johth\}@ida.liu.se }}

\title{Approximability Distance in the Space of $H$-Colourability Problems}
\date{}
\maketitle
\bibliographystyle{abbrv}


\begin{abstract}
A graph homomorphism is a vertex map which carries edges from a source 
graph to edges in a
target graph. We study the approximability properties of the \emph{Weighted Maximum 
$H$-Colourable Subgraph} problem ($\MCol{H}$). The instances of this problem are
edge-weighted graphs $G$ and
the objective is to find a subgraph of $G$ that has maximal total edge weight, under 
the condition that the subgraph has a homomorphism to $H$; note that for $H=K_k$ this 
problem is equivalent to {\sc Max $k$-cut}. To this end, we 
introduce a metric structure on the space of graphs
which allows us to extend previously known approximability results
to larger classes of graphs. Specifically, the approximation algorithms 
for {\sc Max cut} by Goemans and Williamson and {\sc Max $k$-cut} by Frieze and Jerrum 
can be used to yield non-trivial approximation results for $\MCol{H}$. For a variety of 
graphs, we show near-optimality results under the Unique Games Conjecture. 
We also use our method for comparing the performance of Frieze \& Jerrum's algorithm
with H\aa stad's approximation algorithm for general {\sc Max 2-Csp}.
This comparison is, in most cases, favourable to Frieze \& Jerrum.
\end{abstract}

\medskip

\noindent {\bf Keywords}: optimisation, approximability, graph homomorphism, graph $H$-col\-ouring, computational complexity

\section{Introduction}

\noindent
Let $G$ be a simple, undirected and finite graph. Given a subset $S \subseteq V(G)$, a 
{\em cut} in $G$ with respect to $S$ is the
edges from a vertex in $S$ to a vertex in $V(G) \setminus S$.
The {\sc Max cut}-problem asks for the size of
a largest cut in $G$.
More generally, 
a $k$-cut in $G$ is the edges going from $S_i$ to $S_j$, $i \neq j$, where
$S_1,\ldots,S_k$ is a partitioning of $V(G)$, and
the {\sc Max $k$-cut}-problem asks for the size of a largest $k$-cut.
The problem is readily seen to be identical to finding a largest $k$-colourable 
subgraph of $G$. Furthermore, {\sc Max $k$-cut} is known to be \cc{APX}-complete for 
every $k\geq 2$ and consequently does not admit a 
\emph{polynomial-time approximation scheme} ({\sc Ptas}).

In the absence of a {\sc Ptas}, it is interesting to determine the best possible
approximation ratio $c$ within which a problem can be approximated or, 
alternatively the smallest $c$ for which it can be proved that
no polynomial-time approximation algorithm exists (typically under some
complexity-theoretic assumption such as $\P \neq \NP$).
An approximation ratio of $.878567$ for {\sc Max cut} was obtained in 1995 by
Goemans and Williamson~\cite{goemans:williamson:95} 
using semidefinite
programming.
Frieze and Jerrum~\cite{frieze:jerrum:97} determined lower bounds on
the approximation ratios for {\sc Max $k$-cut} using similar techniques.
Sharpened results for small values of $k$ have later been obtained by
de Klerk et al.~\cite{deklerk:etal:04}.
Under the assumption that the {\em Unique Games Conjecture} holds, 
Khot et al.~\cite{khot:etal:2007} showed the approximation ratio for $k = 2$
to be essentially optimal and also provided upper bounds
on the approximation ratio for $k > 2$.
H\aa stad~\cite{hastad:2005} has shown that semidefinite programming is
a universal tool for solving the general
{\sc Max 2-Csp} problem over any domain, in the sense that it
establishes non-trivial approximation results for all of those problems.

In this paper, we study approximability properties of a generalised
version of {\sc Max $k$-cut} called $\MCol{H}$ for undirected graphs $H$.
Jonsson et al.~\cite{jonsson:etal:2007} have shown that, when $H$ is loop-free, $\MCol{H}$ does not admit a {\sc Ptas}. Note that if $H$ contains a loop, then $\MCol{H}$ is
a trivial problem. We present approximability results
for $\MCol{H}$ where $H$ is taken from different families of graphs. Many
of these results turns out to be close to optimal under the Unique Games Conjecture.
Our approach is based on analysing 
approximability algorithms applied to problems which they
are not originally intended to solve.
This vague idea will be clarified below.

Denote by ${\cal G}$ the set of all simple, undirected and
finite graphs. 
A \emph{graph homomorphism} $h$ from $G$ to $H$ is a vertex map which
carries the edges in $G$ to edges in $H$.
The existence of such a map will be denoted by $G \rightarrow H$.
If both $G \rightarrow H$ and $H \rightarrow G$, the graphs $G$ and $H$
are said to be \emph{homomorphically equivalent}.
This equivalence will be denoted by $G \HOMEQ H$.
For a graph $G \in {\cal G}$, let
${\cal W}(G)$ be the set of \emph{weight functions}
$w : E(G) \rightarrow {\mathbb Q}^+$ assigning weights
to edges of $G$.
For a $w \in {\cal W}(G)$, we let
$\|w\| = \sum_{e \in E(G)} w(e)$ denote the total weight of $G$.
Now,
  {\em Weighted Maximum $H$-Colourable Subgraph} (\MCol{H}) is the maximisation problem with
  \begin{description}
  \item[Instance:] An edge-weighted graph $(G,w)$, where $G \in {\cal G}$ and
    $w \in {\cal W}(G)$.

  \item[Solution:] A subgraph $G'$ of $G$ such that $G' \rightarrow H$.

  \item[Measure:] The weight of $G'$ with respect to $w$.
  \end{description}

\noindent
Given an edge-weighted graph $(G,w)$, denote by $\MC_H(G,w)$ the measure of the optimal solution to the problem \MCol{H}.
Denote by $mc_k(G,w)$ the
(weighted) size of the largest $k$-cut in $(G,w)$.
This notation is justified by the fact that
$\MC_k(G,w) = \MC_{K_k}(G,w)$.
In this sense, \MCol{H} generalises {\sc Max $k$-cut}.
The decision version of  $\MCol{H}$, the $H$-\emph{colouring} problem has been 
extensively studied (See ~\cite{HN04} and its many references.) and Hell and 
Ne\v{s}et\v{r}il~\cite{hell:nesetril:90} have shown that the problem is in $\P$ 
if $H$ contains a loop or is bipartite, and $\NP$-complete otherwise. Langberg 
et al.~\cite{langberg:etal:2006} have studied the approximability of \MCol{H} 
when $H$ is part of the input. We also note that \MCol{H} is a specialisation of the
{\sc Max Csp} problem.

The homomorphism relation $\rightarrow$ defines a quasi-order, but not a partial order on
the set ${\cal G}$.
The failing axiom is that of antisymmetry, since
$G \HOMEQ H$ does not necessarily imply $G = H$.
To remedy this, let $\GHAT$ denote the set of equivalence classes of 
${\cal G}$ under homomorphic equivalence.
The relation $\rightarrow$ is defined on $\GHAT$ in the obvious way
and on this set it is a partial order.
In fact, $\rightarrow$ provides a lattice structure on $\GHAT$ and this
lattice will be denoted by ${\cal C}_S$.
For a more in-depth treatment of graph homomorphisms and the
lattice ${\cal C}_S$, see \cite{HN04}.
Here, we endow $\GHAT$ with a metric $d$ defined in the following way:
for $M, N \in {\cal G}$, let
\begin{equation} \label{def:metric1}
d(M,N) = 1 - 
\inf_{\atop{G \in {\cal G}}{w \in {\cal W}(G)}} \frac{\MC_M(G,w)}{\MC_N(G,w)} 
\cdot
\inf_{\atop{G \in {\cal G}}{w \in {\cal W}(G)}} \frac{\MC_N(G,w)}{\MC_M(G,w)}.
\end{equation}

\noindent
In Lemma~\ref{approxresult} we will show that $d$ satisfies the following
property:
\begin{itemize}
  \item Let $M,N \in {\cal G}$ and assume that $\MC_M$ can be approximated within $\alpha$. Then, $\MC_N$ can be approximated within $(1-d(M,N)) \cdot \alpha$. 
\end{itemize}

\noindent
Hence, we can use $d$ for extending previously known approximability bounds
on $\MCol{H}$ to new and larger classes of graphs. For instance, we can
apply Goemans and Williamson's algorithm (which is intended for solving
$\MCol{K_2}$) to $\MCol{C_{11}}$ (i.e. the cycle on 11 vertices) 
and analyse how well the problem is approximated
(we will see later on that Goemans and Williamson's algorithm approximates $\MCol{C_{11}}$ within
0.79869).

In certain cases, the metric $d$ is related to a well-studied graph parameter
known as {\em bipartite density} $b(H)$~\cite{alon:96,Berman:Zhang:dm2003,bondy:locke:86,hophkins:staton:82,locke:90}: if
$H'$ is bipartite subgraph of $H$ with maximum number of edges, then 
$$b(H)=\frac{e(H')}{e(H)}.$$ 
In the end of Section~\ref{partI} we will see that $b(H) = 1 - d(K_2,H)$ for all edge-transitive graphs $H$. 
We note that while $d$ is invariant under homomorphic equivalence, this is not 
in general true for bipartite density.

The paper is divided into two main parts.
Section~\ref{partI} is used for proving the basic properties of $d$,
showing that it is well-defined on $\GHAT$, and that it is a metric.
After that, we show that $d$ is computable by linear programming
and that the computation of $d(M,N)$ can be simplified whenever
$M$ or $N$ is edge-transitive. We conclude this part by providing some examples.

The second part of the paper uses $d$ for studying the approximability of 
$\MCol{H}$. For several classes of graphs, we investigate optimiality issues by exploiting 
inapproximability bounds that are consequences of the Unique Games Conjecture. Comparisons are also made to the bounds achieved by the general {\sc Max 2-Csp}-algorithm by H\aa stad~\cite{hastad:2005}.
Our investigation covers a spectrum of graphs, ranging from graphs with few edges and/or containing large smallest cycles to graphs containing $\Theta(n^2)$ edges. Dense graphs are considered from two perspectives; firstly as graphs having a number of edges close to maximal and secondly as graphs from the ${\cal G}(n,p)$ model of random graphs, pioneered by Erd\H{o}s and R\'enyi~\cite{erdos:renyi:60}.

The techniques used in this paper seem to generalise naturally to
larger sets of problems.
This and other questions are discussed in Section~\ref{openproblems}
which concludes our paper.


\section{Approximation via the Metric $d$} \label{partI}

In this section we start out by proving basic properties of the metric $d$, that $(\GHAT, d)$ is a metric space, and that proximity of graphs $M,N$ in this space lets us interrelate the approximability of $\MCol{M}$ and $\MCol{N}$. Sections~\ref{sec:s} and~\ref{sec:lp} are devoted to showing how to compute $d$.

\subsection{The Space $(\GHAT, d)$} \label{sec:space}

We begin by introducing a function $s:{\cal G} \times {\cal G} \rightarrow \mathbb{R}$
which enables us to express $d$ in a natural way and
simplify forthcoming proofs. Let $M,N \in {\cal G}$
and define
  \begin{equation} \label{def:s}
  s(M,N) = \inf_{\atop{G \in {\cal G}}{w \in {\cal W(G)}}} \frac{mc_M(G,w)}{mc_N(G,w)}.
  \end{equation}

  \noindent
  The definition of $d$ from $(\ref{def:metric1})$ can then be written
  as follows:
  \begin{equation} \label{def:metric}
    d(M,N) = 1 - s(N,M) \cdot s(M,N).
  \end{equation}

  \noindent
  A consequence of $(\ref{def:s})$ is that the relation
  $\MC_M(G,w) \geq s(M,N) \cdot \MC_{N}(G,w)$ holds
  for all $G \in {\cal G}$ and $w \in {\cal W}(G)$.
  Using this observation, we show that $s(M,N)$ and thereby $d(M,N)$ 
  behaves well under graph homomorphisms and homomorphic equivalence.

  \begin{lemma} \label{lem:ineq2}
    Let $M, N \in {\cal G}$ and $M \rightarrow N$.
    Then, for every $G \in {\cal G}$ and every weight function
    $w \in {\cal W}(G)$,
    \[
    \MC_M(G,w) \leq \MC_N(G,w).
    \] 
  \end{lemma}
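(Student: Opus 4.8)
The plan is to prove this directly from the definition of the optimisation problem $\MCol{H}$: any feasible solution for the $M$-colourable subgraph problem is also a feasible solution for the $N$-colourable subgraph problem, because homomorphisms compose. Concretely, fix $G \in {\cal G}$ and $w \in {\cal W}(G)$, and let $G'$ be an optimal solution for $\MCol{M}$ on input $(G,w)$, so $G' \rightarrow M$ and the weight of $G'$ equals $\MC_M(G,w)$. Since $M \rightarrow N$ by hypothesis, composing the homomorphism $G' \rightarrow M$ with $M \rightarrow N$ yields a homomorphism $G' \rightarrow N$. Hence $G'$ is itself a feasible solution for $\MCol{N}$ on $(G,w)$, and its weight is a lower bound on the optimum of that problem; that is, $\MC_N(G,w) \geq \|w\restriction_{E(G')}\| = \MC_M(G,w)$. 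Since $G$ and $w$ were arbitrary, the inequality holds for all $G \in {\cal G}$ and all $w \in {\cal W}(G)$, as required.

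The only technical point worth spelling out is the composition of graph homomorphisms: if $f : G' \rightarrow M$ and $g : M \rightarrow N$ are vertex maps carrying edges to edges, then $g \circ f : G' \rightarrow N$ is again such a map, since for every edge $\{u,v\} \in E(G')$ we have $\{f(u),f(v)\} \in E(M)$ and therefore $\{g(f(u)),g(f(v))\} \in E(N)$. This is elementary, but it is the crux of the argument and should be stated. Everything else — that an optimal solution exists and achieves value $\MC_M(G,w)$, that a feasible solution gives a lower bound on a maximisation problem's optimum — is immediate from the definitions of Section~1.

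I do not anticipate a genuine obstacle here; the statement is essentially a monotonicity observation and the proof is a one-paragraph argument. The only thing to be careful about is not to confuse this with a statement about $s(M,N)$ or $d(M,N)$: the lemma is purely about the measures $\MC_M$ and $\MC_N$ on a common instance, and it will presumably be used immediately afterwards to deduce that $s(N,M) \geq 1$ when $M \rightarrow N$, which is the real payoff for showing $d$ is well-defined on $\GHAT$.
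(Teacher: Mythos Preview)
Your proof is correct and follows exactly the same route as the paper's: any subgraph $G'$ of $G$ with $G' \rightarrow M$ also satisfies $G' \rightarrow N$ by composition, so every feasible solution for $\MCol{M}$ is feasible for $\MCol{N}$ with the same weight. The paper dispatches this in one sentence, whereas you spell out the composition and the optimality step explicitly, but the argument is identical.
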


  \begin{proof}
    If $G' \rightarrow M$ for some subgraph $G'$ of $G$, then
    $G' \rightarrow N$ as well.
    The lemma immediately follows.
    \qed
  \end{proof}

  \begin{corollary} \label{cor:homom}
    If $M$ and $N$ are homomorphically equivalent, 
    then $mc_M(G,w) = mc_N(G,w)$.
  \end{corollary}

  \begin{corollary} \label{cor:shomom}
    Let $M_1 \HOMEQ M_2$ and $N_1 \HOMEQ N_2$ be two pairs of
    homomorphically equivalent graphs.
    Then, for $i,j,k,l \in \{1,2\}$, 
    \[
    s(N_i,M_j) = s(N_k,M_l).
    \]
  \end{corollary}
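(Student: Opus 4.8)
The claim reduces at once to Corollary~\ref{cor:homom}. The plan is to observe that replacing a graph by a homomorphically equivalent one leaves the value $\MC_{(\cdot)}(G,w)$ unchanged for \emph{every} instance, and then to note that $s(N_i,M_j)$ and $s(N_k,M_l)$ are infima of the very same function of $(G,w)$. First I would record the two needed equalities: since $M_1 \HOMEQ M_2$, Corollary~\ref{cor:homom} gives $\MC_{M_1}(G,w) = \MC_{M_2}(G,w)$ for all $G \in {\cal G}$ and all $w \in {\cal W}(G)$, and since $N_1 \HOMEQ N_2$ it likewise gives $\MC_{N_1}(G,w) = \MC_{N_2}(G,w)$ for all such $G,w$. (If one wishes to bypass the corollary, both facts follow directly from Lemma~\ref{lem:ineq2}: homomorphic equivalence supplies homomorphisms in both directions, so the two inequalities furnished by the lemma combine into an equality.)

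Next I would fix arbitrary $i,j,k,l \in \{1,2\}$ and compare the defining infima of $(\ref{def:s})$ term by term. For every $G \in {\cal G}$ and $w \in {\cal W}(G)$ for which the relevant denominators are nonzero, the two displayed equalities give
\[
\frac{\MC_{N_i}(G,w)}{\MC_{M_j}(G,w)} \;=\; \frac{\MC_{N_k}(G,w)}{\MC_{M_l}(G,w)},
\]
so the function over which the infimum in $(\ref{def:s})$ is taken is literally the same for the pair $(N_i,M_j)$ as for the pair $(N_k,M_l)$. Taking the infimum over the common range of instances $(G,w)$ then yields $s(N_i,M_j) = s(N_k,M_l)$, which is the assertion.

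There is essentially no real obstacle here; the only point deserving a moment's attention is that the equality of the measures must hold \emph{simultaneously for all instances} — which is precisely what Corollary~\ref{cor:homom} provides, uniformly in $G$ and $w$ — so that the infimands coincide as functions, rather than merely happening to have equal infima. (One should also note, for tidiness, that $N_1 \HOMEQ N_2$ forces $N_1$ and $N_2$ to be simultaneously edgeless or simultaneously non-edgeless, and similarly for the $M$'s, so the degenerate case where a denominator vanishes identically affects both pairs alike and causes no difficulty.) Once the uniform equality of measures is in hand, the conclusion is just that two infima of one and the same function are equal.
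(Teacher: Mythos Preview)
Your proof is correct and follows essentially the same route as the paper: invoke Corollary~\ref{cor:homom} to get $\MC_{M_1}(G,w)=\MC_{M_2}(G,w)$ and $\MC_{N_1}(G,w)=\MC_{N_2}(G,w)$ for every instance, conclude that the infimands in $(\ref{def:s})$ coincide as functions of $(G,w)$, and take the infimum. Your added remarks on the degenerate edgeless case and on deriving the equalities directly from Lemma~\ref{lem:ineq2} are fine elaborations but not needed for the argument.
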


  \begin{proof}
    Corollary~\ref{cor:homom} shows that for all $G \in {\cal G}$ and
    $w \in {\cal W}(G)$, we have
    \[
    \frac{mc_{M_j}(G,w)}{mc_{N_i}(G,w)} =
    \frac{mc_{M_l}(G,w)}{mc_{N_k}(G,w)}.
    \]
    Now, take the infimum over graphs $G$ and weight functions $w$ on both
    sides.
    \qed
  \end{proof}


\noindent
  Corollary~\ref{cor:shomom} shows that $s$ and $d$ are well-defined as
  functions on the set $\GHAT$.
  We can now show that $d$ is indeed a metric on this space.

  \begin{lemma} \label{metric}
    The pair $(\GHAT, d)$ forms a metric space.
  \end{lemma}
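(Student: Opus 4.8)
The plan is to verify the three metric axioms for $d$ on $\GHAT$: non-negativity together with the identity of indiscernibles, symmetry, and the triangle inequality. Symmetry is immediate from the defining equation $(\ref{def:metric})$, since $d(M,N) = 1 - s(N,M)\cdot s(M,N) = d(N,M)$. For non-negativity, I would first observe that $0 < s(M,N) \le 1$ whenever $M \rightarrow N$ is false in one direction but — more usefully — that in general $s(M,N)\cdot s(N,M) \le 1$: indeed $\MC_M(G,w) \ge s(M,N)\MC_N(G,w)$ and $\MC_N(G,w) \ge s(N,M)\MC_M(G,w)$ for all $(G,w)$, so chaining these gives $s(M,N)s(N,M) \le 1$, hence $d(M,N) \ge 0$. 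Also each ratio $\frac{mc_M(G,w)}{mc_N(G,w)}$ is positive and bounded (e.g. by taking $G$ a single edge one sees the infimum is at most $1$ when $N$ is non-empty), so $s$ is well-defined and strictly positive, and $d(M,N) < 1$ unless the relevant infimum degenerates.

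For the identity of indiscernibles, I would show $d(M,N) = 0$ if and only if $M \HOMEQ N$. The ``if'' direction follows from Corollary~\ref{cor:homom}: homomorphic equivalence forces $mc_M(G,w) = mc_N(G,w)$ for all $(G,w)$, so $s(M,N) = s(N,M) = 1$ and $d = 0$. For the ``only if'' direction, $d(M,N) = 0$ means $s(M,N)s(N,M) = 1$; since both factors lie in $(0,1]$, this forces $s(M,N) = s(N,M) = 1$. Now I would argue that $s(M,N) = 1$ implies $N \rightarrow M$ — equivalently, that if $N \not\rightarrow M$ then one can exhibit a witness graph $(G,w)$ with $\frac{mc_M(G,w)}{mc_N(G,w)}$ bounded strictly below $1$. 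The natural candidate is $G = N$ itself (with all weights $1$): then $mc_N(N,w) = e(N)$, while $mc_M(N,w) < e(N)$ precisely because no subgraph of $N$ carrying all edges maps to $M$, i.e. $N \not\rightarrow M$; so $\frac{mc_M(N,w)}{mc_N(N,w)} \le \frac{e(N)-1}{e(N)} < 1$. Symmetrically $s(N,M) = 1$ gives $M \rightarrow N$, and together these yield $M \HOMEQ N$.

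The triangle inequality $d(M,P) \le d(M,N) + d(N,P)$ is the main obstacle. The key algebraic fact I would establish first is submultiplicativity of $s$: for any $M,N,P$,
\[
s(M,P) \ge s(M,N)\cdot s(N,P),
\]
which follows from $\MC_M(G,w) \ge s(M,N)\MC_N(G,w) \ge s(M,N)s(N,P)\MC_P(G,w)$ and then taking the infimum. Writing $a = s(M,N)s(N,M)$, $b = s(N,P)s(P,N)$, and using submultiplicativity in both orders, one gets $s(M,P)s(P,M) \ge s(M,N)s(N,P)s(P,N)s(N,M) = a\cdot b$. Hence $d(M,P) = 1 - s(M,P)s(P,M) \le 1 - ab$. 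It then remains to check the purely numerical inequality $1 - ab \le (1-a) + (1-b)$ for $a,b \in (0,1]$, i.e. $ab \ge a + b - 1$, equivalently $(1-a)(1-b) \ge 0$, which is clear. Chaining gives $d(M,P) \le d(M,N) + d(N,P)$. The only subtlety is ensuring all the $s$-values are strictly positive so that no division-by-zero or degenerate-infimum issue arises; this I would dispatch at the outset by noting that for loop-free $M$ and any $N$ with at least one edge, $s(M,N)$ is bounded below by a positive constant (for instance $1/2$, since any graph has a bipartite, hence $M$-colourable when... — more carefully, using Lemma~\ref{lem:ineq2} and the existence of some common lower bound), and handling the trivial edgeless cases separately.
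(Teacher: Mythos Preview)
Your proof is correct and follows essentially the same route as the paper's: both arguments establish submultiplicativity $s(M,P) \ge s(M,N)\,s(N,P)$ and then reduce the triangle inequality to the elementary observation $(1-a)(1-b) \ge 0$ for $a,b \in [0,1]$. You give more detail than the paper on the identity of indiscernibles (the paper simply asserts ``$s(M,N)=1$ iff $N\to M$'' while you exhibit the witness $G=N$), and your final paragraph about strict positivity of $s$ is unnecessary hand-wringing --- the triangle-inequality step only needs $a,b \le 1$, not $a,b > 0$, so you can drop that discussion entirely.
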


  \begin{proof}
    Positivity and symmetry follows immediately from the definition
    and the fact that $s(M,N) \leq 1$ for all $M$ and $N$.
    Since $s(M,N) = 1$ if and only if $N \rightarrow M$, it also holds
    that $d(M,N) = 0$ if and only if $M$ and $N$ are homomorphically
    equivalent. That is, $d(M,N) = 0$ if and only if $M$ and $N$
    represent the same member of $\GHAT$.
    Furthermore, for graphs $M, N$ and $K \in {\cal G}$:
    \begin{multline*}
      s(M,N) \cdot s(N,K) = 
      \inf_{\atop{G \in {\cal G}}{w \in {\cal W(G)}}} \frac{mc_M(G,w)}{mc_N(G,w)} \cdot
      \inf_{\atop{G \in {\cal G}}{w \in {\cal W(G)}}} \frac{mc_N(G,w)}{mc_K(G,w)} \\ 
      \leq \inf_{\atop{G \in {\cal G}}{w \in {\cal W(G)}}} \frac{mc_M(G,w)}{mc_N(G,w)} \cdot \frac{mc_N(G,w)}{mc_K(G,w)} = s(M,K).
    \end{multline*}
    Therefore, with $a = s(M,N) \cdot s(N,M), b = s(N,K) \cdot s(K,N)$
    and $c = s(M,K) \cdot s(K,M) \geq a \cdot b$,
    \begin{multline*}
      d(M,N) + d(N,K) - d(M,K)
      = 1-a + 1-b - (1-c) \geq
      \\ \geq 1 - a - b + a \cdot b
      = (1 - a) \cdot (1 - b)
      \geq 0,
    \end{multline*}
    which shows that $d$ satisfies the triangle inequality.
    \qed
  \end{proof}

We say that a maximisation problem $\Pi$ can be approximated within $c<1$ if there
exists a randomised polynomial-time algorithm $A$ such that
$c \cdot Opt(x) \leq {\bf E}(A(x)) \leq Opt(x)$ for all instances $x$ of $\Pi$.
  Our next result shows that proximity of graphs $G$ and $H$ in $d$
  allows us to determine bounds on the approximability of \MCol{H}
  from known bounds on the approximability of \MCol{G}.

  \begin{lemma} \label{approxresult}
    Let $M,N,K$ be graphs.
    If $\MC_M$ can be approximated within $\alpha$,
    then $\MC_N$
    can be approximated within 
    $\alpha \cdot \left(1-d(M,N)\right)$.
    If it is \NP-hard to approximate $\MC_K$ within $\beta$,
    then
    $\MC_N$ is not approximable within $\beta/\left(1-d(N,K)\right)$ unless \P = \NP.
  \end{lemma}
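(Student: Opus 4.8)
The plan is to derive both halves of the lemma from a single two–sided comparison of $\MC_M$ and $\MC_N$, obtaining the hardness half as the contrapositive of the algorithmic half. I will use exactly two earlier facts: the identity $1-d(M,N)=s(M,N)\cdot s(N,M)$ from $(\ref{def:metric})$, and the pointwise bound $\MC_M(G,w)\ge s(M,N)\cdot\MC_N(G,w)$ recorded just after $(\ref{def:s})$; reading the latter with $M$ and $N$ interchanged also gives $\MC_N(G,w)\ge s(N,M)\cdot\MC_M(G,w)$. Since $M$ and $N$ are fixed, $s(M,N)$ and $s(N,M)$ are constants (and computable, by Section~\ref{sec:lp}), so rescaling by them preserves polynomial running time.

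For the algorithmic half, let $A$ witness that $\MC_M$ is approximated within $\alpha$. On an instance $(G,w)$ of $\MCol{N}$, run $A$ on it reinterpreted as an instance of $\MCol{M}$; it returns a subgraph $G_1\subseteq G$ together with a homomorphism $\phi\colon G_1\to M$, with ${\bf E}[w(G_1)]\ge\alpha\cdot\MC_M(G,w)$. To convert $G_1$ into a feasible solution of $\MCol{N}$, build a weighted graph $(M^{\ast},w^{\ast})$ from $M$ by assigning to each edge $\{a,b\}$ the total $w$-weight of the edges $uv\in E(G_1)$ with $\{\phi(u),\phi(v)\}=\{a,b\}$ (discarding edges that receive weight $0$), so that $\MC_M(M^{\ast},w^{\ast})=\|w^{\ast}\|=w(G_1)$; since $M$ is a fixed graph, search exhaustively for a subgraph $M'\subseteq M$ with $M'\to N$ maximising $w^{\ast}(M')$, and pull it back to $G'=\{\,uv\in E(G_1):\{\phi(u),\phi(v)\}\in E(M')\,\}$. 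Then $G'\to M'\to N$, so $G'$ is a feasible $\MCol{N}$-solution for $(G,w)$, and instantiating the defining infimum of $s(N,M)$ at $(M^{\ast},w^{\ast})$,
\[
w(G')=w^{\ast}(M')=\MC_N(M^{\ast},w^{\ast})\ \ge\ s(N,M)\cdot\MC_M(M^{\ast},w^{\ast})\ =\ s(N,M)\cdot w(G_1).
\]
Taking expectations and chaining with ${\bf E}[w(G_1)]\ge\alpha\cdot\MC_M(G,w)$ and $\MC_M(G,w)\ge s(M,N)\cdot\MC_N(G,w)$ yields ${\bf E}[w(G')]\ge\alpha\cdot s(N,M)\cdot s(M,N)\cdot\MC_N(G,w)=\alpha\bigl(1-d(M,N)\bigr)\cdot\MC_N(G,w)$, while ${\bf E}[w(G')]\le\MC_N(G,w)$ holds since $G'$ is feasible; this is exactly the claimed guarantee, and the whole construction is deterministic when $A$ is. (If one only wishes to approximate the optimum value, the conversion step can be skipped: just output $s(N,M)\cdot A(G,w)$ and the same two inequalities follow in one line.)

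For the hardness half, apply the algorithmic half with $(M,N)$ replaced by $(N,K)$, contrapositively. If $\MC_N$ were approximable within $\beta/(1-d(N,K))$ by a deterministic polynomial-time algorithm, the construction above would produce a deterministic polynomial-time algorithm approximating $\MC_K$ within $\frac{\beta}{1-d(N,K)}\cdot\bigl(1-d(N,K)\bigr)=\beta$, contradicting the assumed $\NP$-hardness of approximating $\MC_K$ within $\beta$; hence $\P=\NP$.

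Everything apart from the conversion step is routine algebra. The step that needs care is that one: extracting a genuine $\MCol{N}$-solution — not merely a number — from the output of $A$. It relies on $A$ exposing a homomorphism into $M$ (for Goemans--Williamson this is the cut it finds, for Frieze--Jerrum the $k$-colouring) and on $M$ being fixed, so that optimising over $N$-colourable subgraphs of $(M^{\ast},w^{\ast})$ takes only constant time. Keeping the randomised/deterministic accounting straight across the two halves is the only other thing to watch.
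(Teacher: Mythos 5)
Your proof is correct, and its core is exactly the paper's argument: chain $\MC_N(G,w)\ge s(N,M)\cdot\MC_M(G,w)$, the $\alpha$-guarantee of $A$, and $\MC_M(G,w)\ge s(M,N)\cdot\MC_N(G,w)$ to get the factor $s(N,M)\cdot s(M,N)=1-d(M,N)$, and obtain the hardness half by applying the first half contrapositively to the pair $(N,K)$ so that $\beta/(1-d(N,K))$ collapses to $\beta$. Your parenthetical remark (``just output $s(N,M)\cdot A(G,w)$'') is in substance the paper's entire proof, since the paper's notion of approximation only constrains the expected \emph{value} returned. What you add beyond the paper is the conversion step: pushing $A$'s solution forward to the constant-size weighted graph $(M^{\ast},w^{\ast})$, solving $\MCol{N}$ there by brute force, and pulling back a genuine $\MCol{N}$-feasible subgraph $G'$ with $w(G')\ge s(N,M)\cdot w(G_1)$. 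That refinement is sound and buys a solution-producing (not merely value-producing) algorithm, at the mild cost of assuming $A$ exposes its homomorphism into the fixed graph $M$ (true for Goemans--Williamson and Frieze--Jerrum); it is not needed for the lemma as stated, and the rest of your accounting (constants $s(M,N)$, $s(N,M)$; expectations; randomised vs.\ deterministic) matches the paper.
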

  \begin{proof}
    Let $A(G,w)$ be the measure of the solution returned by an algorithm 
    which approximates $\MC_M$ within $\alpha$.
    We know that for all $G \in {\cal G}$ and $w \in {\cal W}(G)$
    we have the inequalities $\MC_N(G,w) \geq s(N,M) \cdot \MC_M(G,w)$
    and $\MC_M(G,w) \geq s(M,N) \cdot \MC_N(G,w)$.
    Consequently,
    \begin{multline*}
      \MC_N(G,w) \geq \MC_M(G,w) \cdot s(N,M) \geq  A(G,w) \cdot s(N,M) \\
      \geq \MC_M(G,w) \cdot \alpha \cdot s(N,M) \geq \MC_N(G,w) \cdot \alpha \cdot s(N,M) \cdot s(M,N) \\ 
      = \MC_N(G,w) \cdot \alpha \cdot (1-d(M,N)).
    \end{multline*}

    \noindent
    For the second part, assume to the contrary
    that there exists a polynomial-time algorithm $B$ that 
    approximates $\MC_N$ within $\beta/(1-d(N,K))$.
    According to the first part $\MC_K$ can then be approximated
    within $(1-d(N,K)) \cdot \beta/(1-d(N,K)) = \beta$.
    This is a contradiction unless $\P = \NP$. \qed
    
    
  \end{proof}

\subsection{Exploiting Symmetries} \label{sec:s}

We have seen that the metric $d(M,N)$ can be defined in terms of $s(M,N)$. 
  In fact, when $M \rightarrow N$ we have $1-d(M,N) = s(M,N)$.
  It is therefore of interest to find an expression for $s$
  which can be calculated easily.
  After Lemma~\ref{lem:1} (which shows how $\MC_M(G,w)$
  depends on $w$) we introduce a different way
  of describing the solutions to \MCol{M} which makes the proofs of
  the following results more natural.
  In Lemma~\ref{lem:ineq1}, we show that a particular type of weight function
  provides a lower bound on $\MC_M(G,w)/\MC_N(G,w)$.
  Finally, in Lemma~\ref{lem:orbits}, we provide a simpler expression
  for $s(M,N)$ which depends directly on the automorphism group
  and thereby the symmetries of $N$.
  This expression becomes particularly simple when $N$ is edge-transitive.  
  An immediate consequence of this is that $s(K_2,H) = b(H)$
  for edge-transitive graphs $H$. 

  The optimum $\MC_H(G,w)$ is sub-linear with respect to the
  weight function, as is shown by the following lemma. 
  \begin{lemma} \label{lem:1} 
    Let $G, H \in {\cal G}$, $\alpha \in {\mathbb Q}^+$ and let $w, w_1, \ldots, w_r \in {\cal W}(G)$ be weight functions on $G$.
    Then,
    \begin{itemize}
    \item $\MC_H(G,\alpha \cdot w) = \alpha \cdot \MC_H(G,w)$,
    \item $\MC_H(G, \sum_{i=1}^{r} w_i) \leq \sum_{i = 1}^{r} \MC_H(G,w_i)$.
    \end{itemize}
  \end{lemma}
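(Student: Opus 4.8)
The plan is to reduce both statements to a single elementary observation: the set of \emph{feasible} solutions to $\MCol{H}$ on an input $(G,w)$ --- namely the subgraphs $G'$ of $G$ with $G'\rightarrow H$ --- depends only on $G$ (and $H$), and not on the weight function $w$, which enters solely through the objective. So I would fix $G$, write $\mathcal{F}$ for this finite and non-empty collection of feasible subgraphs, and for $G'\in\mathcal{F}$ and any $u\in\mathcal{W}(G)$ set $\mathrm{val}_u(G')=\sum_{e\in E(G')}u(e)$, so that $\MC_H(G,u)=\max_{G'\in\mathcal{F}}\mathrm{val}_u(G')$, the maximum being attained because $\mathcal{F}$ is finite.

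For the first item I would simply note that $\mathrm{val}_{\alpha\cdot w}(G')=\alpha\cdot\mathrm{val}_w(G')$ for every $G'\in\mathcal{F}$; since $\alpha\in\mathbb{Q}^+$ is positive (and $\alpha\cdot w$ is again a legitimate weight function), maximising $\mathrm{val}_{\alpha\cdot w}$ over $\mathcal{F}$ returns exactly $\alpha$ times the optimal value of $\mathrm{val}_w$, which gives $\MC_H(G,\alpha\cdot w)=\alpha\cdot\MC_H(G,w)$.

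For the second item, let $G^\star\in\mathcal{F}$ be an optimal solution of the instance $(G,\sum_{i=1}^r w_i)$. Additivity of $\mathrm{val}$ in its weight argument gives
\[
\MC_H\Big(G,\sum_{i=1}^r w_i\Big)=\mathrm{val}_{\sum_i w_i}(G^\star)=\sum_{i=1}^r\mathrm{val}_{w_i}(G^\star).
\]
But $G^\star$ is also a feasible solution of each individual instance $(G,w_i)$, hence $\mathrm{val}_{w_i}(G^\star)\le\MC_H(G,w_i)$ for every $i$; summing over $i$ yields the claimed subadditivity.

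The whole argument is bookkeeping, so there is no real obstacle to anticipate. The one point deserving a moment's care is the framing observation above --- that admitting a homomorphism to $H$ is a property of the subgraph alone and is insensitive to the weights --- together with the (trivial) remark that the optimum is attained by some $G^\star\in\mathcal{F}$, since this is precisely what allows the second part to single out one subgraph and compare it against each $(G,w_i)$ separately.
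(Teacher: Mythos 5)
Your proposal is correct and follows essentially the same route as the paper: the first part by direct scaling, and the second by taking an optimal subgraph $G^\star$ for the instance $(G,\sum_i w_i)$, splitting its measure additively over the $w_i$, and bounding each term by $\MC_H(G,w_i)$. The only difference is that you spell out the feasible-set bookkeeping the paper leaves implicit.
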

  
  \begin{proof}
    The first part is trivial. For the second part,
    let $G'$ be an optimal solution to the instance $(G,\sum_{i=1}^r w_i)$
    of \MCol{H}.
    Then, the measure of this solution equals the sum of the measures
    of $G'$ as a (possibly suboptimal) solution to each of the
    instances $(G,w_i)$.
    \qed
  \end{proof}

  \noindent
  An alternative description of the solutions to \MCol{H} is
  as follows:
  let $G$ and $H \in {\cal G}$, and
  for any vertex map $f : V(G) \rightarrow V(H)$,
  let $f^\# : E(G) \rightarrow E(H)$ be the (partial) edge map
  induced by $f$. 
  In this notation $h : V(G) \rightarrow V(H)$ is a graph homomorphism
  precisely when $(h^\#)^{-1}(E(H)) = E(G)$ or,
  alternatively when $h^\#$ is a total function.
  The set of solutions to an instance $(G,w)$ of \MCol{H}
  can then be taken to be the set of vertex maps $f : V(G) \rightarrow V(H)$
  with the measure
  \[
  w(f) = \sum_{e \in (f^\#)^{-1}(E(H))} w(e).
  \]
  In the remaining part of this section, we will use this description
  of a solution.
  Let $f : V(G) \rightarrow V(H)$ be an optimal solution to the instance
  $(G,w)$ of \MCol{H}. 
  Define the weight $w_f \in {\cal W}(H)$ as follows:
  for each $e \in E(H)$, let
  \[
    w_f(e) = 
    \sum_{e' \in (f^\#)^{-1}(e)} \frac{w(e')}{mc_H(G,w)}.
  \]
  We now prove the following result:

  \begin{lemma} \label{lem:ineq1}
    Let $M, N \in {\cal G}$ be two graphs.
    Then, for every $G \in {\cal G}$, every $w \in {\cal W}(G)$,
    and any optimal solution $f$ to $(G,w)$ of
    \MCol{N}, it holds that
    \[
    \frac{mc_M(G,w)}{mc_N(G,w)} \geq mc_M(N,w_f).
    \]
  \end{lemma}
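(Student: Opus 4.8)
The plan is to exhibit a concrete weighted instance of $\MCol{M}$ — namely the graph $N$ itself equipped with the weight function $w_f$ derived from the optimal solution $f$ to $(G,w)$ of $\MCol{N}$ — and to argue that the ratio $mc_M(G,w)/mc_N(G,w)$ is bounded below by the performance of this instance, i.e. by $mc_M(N,w_f)$. The key observation is that $w_f$ is normalised precisely so that $\|w_f\| = \sum_{e \in E(H)} \sum_{e' \in (f^\#)^{-1}(e)} w(e')/mc_N(G,w)$, and since $f$ is an optimal solution to $(G,w)$ of $\MCol{N}$, the numerator of this sum equals $mc_N(G,w)$; hence $\|w_f\| = 1$. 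So the weight $w_f$ distributes a total mass of $1$ over the edges of $N$, weighting each edge of $N$ in proportion to the total $w$-weight of the $G$-edges that $f$ maps onto it.

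First I would show that $mc_M(G,w) \geq mc_N(G,w) \cdot mc_M(N,w_f)$. To do this, take an optimal solution $g : V(N) \rightarrow V(M)$ to the instance $(N, w_f)$ of $\MCol{M}$, so that $w_f(g) = mc_M(N,w_f)$. Then consider the composite map $g \circ f : V(G) \rightarrow V(M)$. Because $(g\circ f)^\# $ is defined on an edge $e'$ of $G$ whenever $f^\#(e')$ is an edge of $N$ that $g^\#$ in turn maps to an edge of $M$, the $w$-weight of $G$-edges that survive under $g \circ f$ is exactly $mc_N(G,w)$ times the $w_f$-weight of $N$-edges surviving under $g$; the factor $mc_N(G,w)$ comes from undoing the normalisation in the definition of $w_f$. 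Thus $g \circ f$ is a solution to $(G,w)$ of $\MCol{M}$ with measure $mc_N(G,w) \cdot mc_M(N,w_f)$, and since $mc_M(G,w)$ is the optimum it can only be larger. Dividing through by $mc_N(G,w)$ gives the claimed inequality.

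The main obstacle, and the step demanding the most care, is the bookkeeping in the composition argument: one must verify that summing $w$ over $(g\circ f)^\#{}^{-1}(E(M))$ really does decompose as a sum over edges $e \in E(N)$ with $g^\#(e) \in E(M)$, of the quantity $\sum_{e' \in (f^\#)^{-1}(e)} w(e')$, i.e. that the preimage fibres of $f^\#$ partition the relevant $G$-edges without overlap or omission. This is where one uses that $f^\#$ and $g^\#$ are partial functions induced by vertex maps, so that $(g \circ f)^\# = g^\# \circ f^\#$ as partial maps on edges, and the fibre decomposition is just the standard preimage-of-preimage identity. Once this identity is in hand, the inequality $w(g\circ f) = mc_N(G,w)\cdot w_f(g)$ is immediate, and the lemma follows. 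No appeal to the earlier results is strictly needed beyond the definitions, though Lemma~\ref{lem:1} and Lemma~\ref{lem:ineq2} confirm that the quantities involved are well-behaved.
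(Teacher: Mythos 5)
Your proposal is correct and follows essentially the same route as the paper's own proof: take an optimal $g$ for the instance $(N,w_f)$ of \MCol{M}, compose it with $f$ to obtain a solution of $(G,w)$ whose measure is $mc_M(N,w_f)\cdot mc_N(G,w)$, and divide by $mc_N(G,w)$. (A minor remark: $(g\circ f)^{\#}$ may be defined on $G$-edges where $f^{\#}$ is not, so the composite's measure is in general only \emph{at least} $mc_N(G,w)\cdot w_f(g)$ — but that lower bound is all the argument needs, and the paper's proof is equally terse on this point.)
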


  \begin{proof}
    Arbitrarily choose an optimal solution $g : V(N) \rightarrow V(M)$ 
    to the instance $(N,w_f)$ of \MCol{M}.
    Then, $g \circ f$ is a solution to $(G,w)$ as an instance of 
    \MCol{M}.
    The weight of this solution is $mc_M(N,w_f) \cdot mc_N(G,w)$,
    which implies that
    \[
    mc_M(G,w) \geq mc_M(N,w_f) \cdot mc_N(G,w),
    \]
    and the result follows after division by $mc_N(G,w)$.
    \qed
  \end{proof}

  Let $M$ and $N \in {\cal G}$ be graphs and
  let $A = \Aut(N)$ be the automorphism group of $N$.
  We will let $\pi \in A$ act on $\{u,v\} \in E(N)$ by 
  $\pi \cdot \{u,v\} = \{\pi(u), \pi(v)\}$.
  The graph $N$ is edge-transitive if and only if $A$ acts
  transitively on the edges of $N$.
  Let ${\cal {\hat W}}(N)$ be the set of weight functions $w \in {\cal W}(N)$ which satisfy $\|w\| = 1$ and for which $w(e) = w(\pi \cdot e)$ for all $e \in E(N)$ and $\pi \in \Aut(N)$.

  \begin{lemma} \label{lem:orbits}
    Let $M,N\in {\cal G}$.
    Then,
    \[
    s(M,N) = \inf_{w \in {\cal {\hat W}}(N)} mc_M(N,w).
    \]
    In particular, when $N$ is edge-transitive,
    \[
    s(M,N) = mc_M(N,1/e(N)).
    \]
  \end{lemma}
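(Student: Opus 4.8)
The plan is to show the two inequalities $s(M,N) \le \inf_{w \in \hat{\cal W}(N)} mc_M(N,w)$ and $s(M,N) \ge \inf_{w \in \hat{\cal W}(N)} mc_M(N,w)$ separately, and then specialise to the edge-transitive case. The first inequality is the cheap direction: every $w \in \hat{\cal W}(N)$ is in particular a weight function on $N$ with $\|w\| = 1 = mc_N(N,w)$ (since $N \to N$, the identity homomorphism colours all of $N$, so $mc_N(N,w) = \|w\| = 1$), hence the ratio $mc_M(N,w)/mc_N(N,w) = mc_M(N,w)$ is among those over which the infimum in $(\ref{def:s})$ is taken. Therefore $s(M,N) \le mc_M(N,w)$ for every such $w$, and taking the infimum over $\hat{\cal W}(N)$ gives the bound.

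For the reverse inequality, I would start from an arbitrary $G \in {\cal G}$, $w \in {\cal W}(G)$, and an optimal solution $f : V(G) \to V(N)$ to $(G,w)$ as an instance of $\MCol{N}$. By Lemma~\ref{lem:ineq1}, $mc_M(G,w)/mc_N(G,w) \ge mc_M(N,w_f)$, where $w_f \in {\cal W}(N)$ is the pushed-forward weight with $\|w_f\| = 1$. So it suffices to bound $mc_M(N,w_f)$ from below by $\inf_{w \in \hat{\cal W}(N)} mc_M(N,w)$. The weight $w_f$ need not be $\Aut(N)$-invariant, so the idea is to \emph{symmetrise} it: let $A = \Aut(N)$ and set $\bar w = \frac{1}{|A|} \sum_{\pi \in A} \pi \cdot w_f$, where $(\pi \cdot w_f)(e) = w_f(\pi^{-1} \cdot e)$. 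Then $\bar w$ is $A$-invariant, $\|\bar w\| = 1$, so $\bar w \in \hat{\cal W}(N)$. Now I must relate $mc_M(N, w_f)$ to $mc_M(N, \bar w)$. Here I would use that each automorphism $\pi$ of $N$ induces a measure-preserving bijection on solutions to $\MCol{M}$ over $N$ — precisely, if $g : V(N) \to V(M)$ is a homomorphism of some subgraph, then $g \circ \pi$ realises the same edge set up to the relabelling $\pi$, so $mc_M(N, \pi \cdot w_f) = mc_M(N, w_f)$ for every $\pi \in A$. Combining this with the second (sublinearity) part of Lemma~\ref{lem:1}, $mc_M(N, \bar w) = mc_M\bigl(N, \frac{1}{|A|}\sum_\pi \pi \cdot w_f\bigr) \le \frac{1}{|A|}\sum_\pi mc_M(N, \pi \cdot w_f) = mc_M(N, w_f)$. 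Hence $mc_M(N, w_f) \ge mc_M(N, \bar w) \ge \inf_{w \in \hat{\cal W}(N)} mc_M(N,w)$, and taking the infimum over all $G, w$ on the left-hand side of the Lemma~\ref{lem:ineq1} bound yields $s(M,N) \ge \inf_{w \in \hat{\cal W}(N)} mc_M(N,w)$.

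The main obstacle I anticipate is getting the symmetrisation step exactly right: one has to be careful that $\pi \cdot w_f$ is genuinely a weight function on $N$ (it is, since $\pi$ permutes $E(N)$), that the averaging over $A$ lands in ${\mathbb Q}^+$-valued weights (it does, as $|A|$ is finite and a convex rational combination of rational weights is rational), and crucially that $mc_M$ is invariant under the $\Aut(N)$-action on weights rather than merely sub-additive — this invariance is what makes the averaging argument collapse to an equality-style bound rather than losing a factor. The direction of the inequality from Lemma~\ref{lem:1} must also point the right way: sub-additivity gives $mc_M(N, \bar w) \le$ average, which is exactly what we need since we want a \emph{lower} bound on $mc_M(N, w_f)$.

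For the edge-transitive case, $A = \Aut(N)$ acts transitively on $E(N)$, so any $A$-invariant weight function of total weight $1$ must assign the same value $1/e(N)$ to every edge; thus $\hat{\cal W}(N)$ is the single point $\{1/e(N)\}$ (here I read $1/e(N)$ as the constant weight function), and the infimum collapses to $mc_M(N, 1/e(N))$.
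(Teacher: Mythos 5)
Your proposal is correct and follows essentially the same route as the paper: the same cheap direction via $mc_N(N,w)=\|w\|=1$, and the same symmetrisation of the pushed-forward weight $w_f$ over $\Aut(N)$ using Lemma~\ref{lem:ineq1} together with the scaling and sublinearity of Lemma~\ref{lem:1}. The only cosmetic difference is that you justify $mc_M(N,\pi\cdot w_f)=mc_M(N,w_f)$ directly (correctly, via $g\mapsto g\circ\pi$), whereas the paper gets the same effect by noting that each $\pi\circ f$ is also an optimal solution and summing the instances of inequality~(\ref{eq:ineq1b}) over $\pi\in\Aut(N)$.
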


  \begin{proof}
    The easy direction goes through as follows:
    \[
    s(M,N) \leq
    \inf_{w \in {\cal {\hat W}}(N)} \frac{mc_M(N,w)}{mc_N(N,w)}
  = \inf_{w \in {\cal {\hat W}}(N)} mc_M(N,w).
    \]
    \medskip

    \noindent
    For the first part of the lemma, it will be sufficient to prove that the following inequality holds for for some $w' \in {\cal {\hat W}}$.
    \begin{equation} \label{eq:ineq1b}
      \frac{\MC_M(G,w)}{\MC_N(G,w)} \geq \MC_M(N,w')
    \end{equation}
    Taking the infimum over graphs $G$ and weight functions $w \in {\cal W}(G)$ in the left-hand side of this inequality will then show that
    \[
    s(M,N) \geq \MC_M(N,w') \geq \inf_{w \in {\cal {\hat W}}(N)} mc_M(N,w).
    \]
    
    \noindent
    Let $A = \Aut(N)$ be the automorphism group of $N$.
    Let $\pi \in A$ be an arbitrary automorphism of $N$.
    If $f$ is an optimal solution to $(G,w)$ as an instance of $\MCol{N}$,
    then so is $f_\pi = \pi \circ f$.
    Let $w_\pi = w_{\pi \circ f}$.
    By Lemma~\ref{lem:ineq1}, inequality (\ref{eq:ineq1b}) is satisfied by $w_\pi$.
    Summing $\pi$ in this inequality over $A$ gives
    \[
    |A| \cdot \frac{mc_M(G,w)}{mc_N(G,w)} \geq \sum_{\pi \in A} mc_M(N,w_\pi) \geq mc_M(N,\sum_{\pi \in A} w_\pi),
    \]
    where the last inequality follows from Lemma~\ref{lem:1}.
    The weight function $\sum_{\pi \in A} w_\pi$ can be determined as follows.
    \[
      \sum_{\pi \in A} w_\pi(e) =
      \sum_{\pi \in A}
      \frac{\sum_{e' \in (f^\#)^{-1}(\pi \cdot e)} w(e')}{mc_N(G,w)} = 
      \frac{|A|}{|Ae|} \cdot
      \frac{\sum_{e' \in (f^\#)^{-1}(Ae)} w(e')}{mc_N(G,w)},
      \]
    where $Ae$ denotes the orbit of $e$ under $A$.
    Thus, $w' \sum_{\pi \in A} w_\pi / |A| \in {\cal {\hat W}}(N)$ and
    $w'$ satisfies $(\ref{eq:ineq1b})$
    so the first part follows.
    \medskip

    \noindent
    For the second part, note that when the automorphism group $A$ acts transitively on $E(N)$, there is only one orbit $Ae = E(N)$.
    Then, the weight function $w'$ is given by
    \[
    w'(e) = \frac{1}{e(N)} \cdot \frac{\sum_{e' \in (f^\#)^{-1}(E(N))} w(e')}{mc_N(G,w)} = \frac{1}{e(N)} \cdot \frac{mc_N(G,w)}{mc_N(G,w)}.
    \]
    \qed
  \end{proof}

\subsection{Tools for Computing Distances} \label{sec:lp}

  From Lemma~\ref{lem:orbits} it follows that in order to determine $s(M,N)$, 
  it is sufficient to minimise $mc_M(N,w)$ over ${\cal {\hat W}}(N)$.
  We will now use this observation to describe a linear program for computing $s(M,N)$.
  For $i \in \{ 1, \ldots, r \}$, let $A_i$ be the orbits of $\Aut(N)$
  acting on $E(N)$.
  The measure of a solution $f$ when $w \in {\cal {\hat W}}(N)$ is equal to
  $\sum_{i = 1}^r w_i \cdot f_i$, where $w_i$ is the weight of an edge in
  $A_i$ and $f_i$ is the number of edges in $A_i$ which are mapped to 
  an edge in $M$ by $f$.
  Note that given a $w$, the measure of a solution $f$ depends only on
  the vector $(f_1, \ldots, f_r) \in {\mathbb N}^r$.
  Therefore, take the solution space to be the set of such vectors:
  \[
  F = \{\, (f_1, \ldots, f_r) \;|\; \text{$f$ is a solution to $(N,w)$ of \MCol{M}} \}
  \]
  Let the variables of the linear program be 
  $w_1, \ldots, w_r$ and $s$,
  where $w_i$ represents the weight of each element in the orbit $A_i$
  and $s$ is an upper bound on the solutions.
  \[
  \begin{array}{lll}
    \min s \\
    \smallskip
    \sum_i f_i \cdot w_i \leq s & \text{for each $(f_1, \ldots, f_r) \in F$} \\
    \sum_i |A_i| \cdot w_i = 1 \\
    \smallskip
    w_i, s \geq 0 \\
  \end{array}
  \]
  
\noindent
  Given a solution $w_i, s$ to this program, $w(e) = w_i$ when $e \in A_i$ is a weight function which minimises $mc_M(G,w)$. The value of this solution is $s = s(M,N)$.

  \begin{example} \label{ex:wheel}
    The \emph{wheel graph} on $k$ vertices, $W_k$, is a graph that contains a cycle of length $k-1$ plus a vertex $v$ not in the cycle such that $v$ is connected to every other vertex. We call the edges of the $k-1$-cycle \emph{outer edges} and the remaining $k-1$ edges $\emph{spokes}$. It is easy to see that $W_k$ contains a maximum clique of size 4 if $k=4$ (in fact, $W_4=K_4$)
and a maximum clique of size 3 in all other cases. Furthermore, $W_k$ is
3-colourable if and only if $k$ is odd, and 4-colourable otherwise.
This implies that for odd $k$, the wheel graphs are homomorphically equivalent to $K_3$.

    We will determine $s(K_3, W_n)$ for even $n \geq 6$ using
    the previously described construction of a linear program.
    Note that the group action of $\Aut(W_n)$ on $E(W_n)$ has two
    orbits, one which consists of all outer edges and one which consists
    of all the spokes.
    If we remove one outer edge or one spoke from $W_k$, 
    then the resulting graph can be mapped homomorphically onto $K_3$.
    Therefore, it suffices to choose $F = \{f, g\}$ with 
    $f = (k-1,k-2)$ and $g = (k-2,k-1)$ since all other solutions
    will have a smaller measure than at least one of these.
    The program for $W_k$ looks like this:    
    \[
    \begin{array}{lll}
      \min s \\
      (k-1) \cdot w_1 + (k-2) \cdot w_2 \leq s \\
      (k-2) \cdot w_1 + (k-1) \cdot w_2 \leq s \\
      (k-1) \cdot w_1 + (k-1) \cdot w_2 = 1 \\
      w_i, s \geq 0 \\
    \end{array}
    \]
    The solution is $w_1 = w_2 = 1/(2k-2)$ with
    $s(K_3, W_k) = s = (2k-3)/(2k-2)$.
\end{example}


 \begin{example} 
    An example where the weights in the optimal solution to the
    linear program are different for different orbits is given
    by $s(K_2,K_{8/3})$.
    The {\em rational complete graph} $K_{8/3}$ has
    vertex set $\{0, 1, \ldots, 7\}$, 
    which is thought of as placed on a circle with $0$ following $7$.
    There is an edge between any two vertices which are at a distance
    at least 3 from each other.
    Each vertex has distance 4 to exactly one other vertex, which
    means there are 4 such edges.
    These edges form one orbit $A_1$ and the remaining 8 edges form
    the other orbit $A_2$.
    There are two maximal solutions, $f = (0,8)$ and $g = (4,6)$ which
    gives the program:
    \[
    \begin{array}{lll}
      \min s \\
      0 \cdot w_1 + 8 \cdot w_2 \leq s \\
      4 \cdot w_1 + 6 \cdot w_2 \leq s \\
      4 \cdot w_1 + 8 \cdot w_2 = 1 \\
      w_i, s \geq 0 \\
    \end{array}
    \]
    The solution to this program is $w_1 = 1/20$ and $w_2 = 1/10$ with
    the optimum being $4/5$.
  \end{example}

\noindent
In some cases, it may be hard to determine a desired distance between $H$ and $M$ or $N$.
If we know that $H$ is homomorphically sandwiched between
$M$ and $N$ so that $M \rightarrow H \rightarrow N$, then we can provide an upper bound on the distance of $H$ to $M$ or $N$ by using the distance between $M$ and $N$.
Formally, we have:

  \begin{lemma} \label{lemma:homdist}
    Let $M \rightarrow H \rightarrow N$.
    Then,
    \[
    s(M,H) \geq s(M,N) \qquad \textrm{and} \qquad s(H,N) \geq s(M,N).
    \]
  \end{lemma}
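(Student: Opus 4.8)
The plan is to derive both inequalities directly from two facts already in place: the monotonicity $\MC_M(G,w) \le \MC_N(G,w)$ whenever $M \rightarrow N$ (Lemma~\ref{lem:ineq2}), and the trivial observation that a pointwise inequality between two nonnegative families of reals is inherited by their infima. No semidefinite programming and no symmetry arguments are needed. As throughout this section, the infimum defining $s$ is understood to range over graphs $G$ carrying at least one edge, so that all the ratios below are genuine nonnegative reals; an edgeless $G$ would only contribute the indeterminate $0/0$ and may be ignored.

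First I would prove $s(M,N) \le s(H,N)$. Since $M \rightarrow H$, Lemma~\ref{lem:ineq2} gives $\MC_M(G,w) \le \MC_H(G,w)$ for every $G \in {\cal G}$ and every $w \in {\cal W}(G)$; dividing both sides by the positive number $\MC_N(G,w)$ yields
\[ \frac{\MC_M(G,w)}{\MC_N(G,w)} \;\le\; \frac{\MC_H(G,w)}{\MC_N(G,w)}, \]
and taking the infimum over all such $(G,w)$ on both sides gives precisely $s(M,N) \le s(H,N)$. For the second inequality $s(M,N) \le s(M,H)$ the homomorphism instead shrinks the denominator: $H \rightarrow N$ together with Lemma~\ref{lem:ineq2} gives $\MC_H(G,w) \le \MC_N(G,w)$, hence
\[ \frac{\MC_M(G,w)}{\MC_H(G,w)} \;\ge\; \frac{\MC_M(G,w)}{\MC_N(G,w)} \]
for every $(G,w)$, and again passing to infima completes the argument.

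I do not expect a genuine obstacle here — the argument is essentially two lines once Lemma~\ref{lem:ineq2} is invoked, and the only point deserving a remark is the degenerate edgeless case noted above. As a consistency check one could alternatively combine the chain inequality $s(X,Y)\cdot s(Y,Z) \le s(X,Z)$ from the proof of Lemma~\ref{metric} with the characterisation $s(A,B) = 1$ iff $B \rightarrow A$ recorded there: from $M \rightarrow H$ one gets $s(H,M) = 1$, whence $s(M,N) = s(H,M)\cdot s(M,N) \le s(H,N)$, and from $H \rightarrow N$ one gets $s(N,H) = 1$, whence $s(M,N) = s(M,N)\cdot s(N,H) \le s(M,H)$. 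Either route establishes Lemma~\ref{lemma:homdist}.
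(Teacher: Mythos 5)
Your main argument is exactly the paper's proof: apply Lemma~\ref{lem:ineq2} to get $\MC_H(G,w)\leq \MC_N(G,w)$ (resp.\ $\MC_M(G,w)\leq \MC_H(G,w)$) and pass the resulting pointwise inequality between ratios to the infima defining $s$; the paper proves the first inequality this way and notes the second follows similarly. Your remark on edgeless instances and the alternative check via the chain inequality $s(X,Y)\cdot s(Y,Z)\leq s(X,Z)$ are fine but not needed; the proposal is correct and takes essentially the same route.
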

  
  \begin{proof}
    Since $H \rightarrow N$, it follows from Lemma~\ref{lem:ineq2} that
    $\MC_H(G,w) \leq \MC_N(G,w)$. Thus,
    \[
      s(M,H)=
      \inf_{\atop{G \in {\cal G}}{w \in {\cal W}(G)}} \frac{\MC_M(G,w)}{\MC_H(G,w)} \geq
      \inf_{\atop{G \in {\cal G}}{w \in {\cal W}(G)}} \frac{\MC_M(G,w)}{\MC_N(G,w)} = s(M,N).
      \]
    The second part follows similarly.
    \qed
  \end{proof}
We will see several applications of this lemma in Sections~\ref{subsec:sparse} and~\ref{subsec:dense}.



\section{Approximability of \MCol{\textit{H}}} \label{sec:appl}

Let $A$ be an approximation algorithm for $\MCol{H}$.
Our method basically allows us to measure how well $A$ performs on
other problems $\MCol{H'}$.
In this section, we will apply the method to various algorithms and various graphs. 
We do two things for each kind of graph under consideration: compare the performance of our method with that of some existing, leading, approximation algorithm and investigate how close to optimality we can get. Our main algorithmic tools will be the following:

\begin{theorem}[Goemans and Williamson~\cite{goemans:williamson:95}]\label{alphagw}
$mc_2$ can be approximated within \[\alpha _{GW}=\min _{0<\theta <\pi}\frac{\theta/\pi}{(1-\cos \theta )/2} \approx .878567.\]
\end{theorem}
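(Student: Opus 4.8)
The plan is to recover the classical argument of Goemans and Williamson, which proceeds in three stages: an exact quadratic reformulation of \MCol{K_2}, a semidefinite relaxation solvable in polynomial time, and a randomised hyperplane rounding whose analysis reduces to a single pointwise trigonometric inequality.

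First I would model an instance $(G,w)$ of \MCol{K_2}, writing $V(G)=\{1,\dots,n\}$, as the problem of maximising $\frac12\sum_{\{i,j\}\in E(G)} w(\{i,j\})\,(1-y_iy_j)$ over assignments $y_i\in\{-1,1\}$; the cut $S=\{i : y_i=1\}$ shows that the optimum of this program is exactly $mc_2(G,w)$. Replacing each scalar $y_i$ by a unit vector $v_i\in\mathbb{R}^n$ and each product $y_iy_j$ by the inner product $\langle v_i,v_j\rangle$ yields the relaxation of maximising $\frac12\sum_{\{i,j\}\in E(G)} w(\{i,j\})\,(1-\langle v_i,v_j\rangle)$ subject to $\|v_i\|=1$ for every $i$; call its optimum $\mathrm{SDP}(G,w)$. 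Two facts are needed. (i) A scalar solution embeds as a vector solution (set $v_i=y_i e_1$ for a fixed unit vector $e_1$), so $\mathrm{SDP}(G,w)\ge mc_2(G,w)$. (ii) Expressed through the Gram matrix $X=(\langle v_i,v_j\rangle)_{i,j}$, the feasible region is the set of positive semidefinite matrices with unit diagonal and the objective is linear in $X$; hence for any $\epsilon>0$ a feasible $X$ — and, via a Cholesky/Gram factorisation, vectors $v_i$ realising it — of value at least $\mathrm{SDP}(G,w)-\epsilon$ can be computed in time polynomial in the bit size of the instance and $\log(1/\epsilon)$, by the ellipsoid method or an interior-point method; here the hypothesis that $w$ takes nonnegative rational values keeps the encoding polynomial.

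Second comes the rounding. Pick a vector $r$ uniformly at random from the unit sphere of $\mathbb{R}^n$ and return the cut $S=\{i : \langle v_i,r\rangle\ge 0\}$. The geometric lemma is that, for fixed unit vectors $v_i,v_j$, the random hyperplane through the origin with normal $r$ separates them with probability exactly $\theta_{ij}/\pi$, where $\theta_{ij}=\arccos\langle v_i,v_j\rangle\in[0,\pi]$; this is because the direction of the projection of $r$ onto the plane spanned by $v_i,v_j$ is uniform on the circle. By linearity of expectation the expected weight of the returned cut is therefore $\sum_{\{i,j\}\in E(G)} w(\{i,j\})\,\theta_{ij}/\pi$. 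Comparing this edge by edge with the term $\frac12 w(\{i,j\})\,(1-\cos\theta_{ij})$ contributed by the same edge to $\mathrm{SDP}(G,w)$, and using $w\ge 0$, the whole estimate reduces to the inequality $\frac{\theta/\pi}{(1-\cos\theta)/2}\ge\alpha_{GW}$ for all $\theta\in(0,\pi)$, which holds by the very definition of $\alpha_{GW}$ as the minimum of the left-hand side over that interval. Hence the expected output is at least $\alpha_{GW}\cdot\mathrm{SDP}(G,w)\ge\alpha_{GW}\cdot mc_2(G,w)$, while every cut trivially has weight at most $mc_2(G,w)$. Folding in the arbitrarily small loss from (ii) — a standard technicality, since the printed constant $.878567$ is a conservative rounding of the true minimum and in any case $mc_2(G,w)\ge\tfrac12\|w\|$ bounds the relative size of the additive SDP error — gives a randomised polynomial-time algorithm meeting the paper's definition of approximation within $\alpha_{GW}$.

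I expect the main obstacle to be the analytic core of the third stage: showing that $g(\theta)=\frac{2\theta}{\pi(1-\cos\theta)}$ attains a positive minimum on $(0,\pi)$ and computing its value. One checks that $g$ is continuous and positive on $(0,\pi)$, with $g(\theta)\to+\infty$ as $\theta\to 0^+$ and $g(\theta)\to 1$ as $\theta\to\pi^-$, so the infimum is attained at an interior critical point; differentiating and applying the half-angle identities, that point $\theta^{*}$ satisfies $\tan(\theta^{*}/2)=\theta^{*}$, whence $\alpha_{GW}=g(\theta^{*})\approx .878567$. The only other place where genuine care is required is the finite-precision solution of the semidefinite program in (ii), which I would not want to dismiss with the phrase ``solve the SDP''.
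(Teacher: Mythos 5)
Your proposal is correct and is essentially the classical Goemans--Williamson argument (quadratic $\pm1$ formulation, semidefinite relaxation, random-hyperplane rounding, and the pointwise bound $\frac{\theta/\pi}{(1-\cos\theta)/2}\ge\alpha_{GW}$), which is precisely the proof in the reference the paper cites; the paper itself states the theorem without proof. Your handling of the finite-precision SDP solution via $mc_2(G,w)\ge\tfrac12\|w\|$ is a sensible way to meet the paper's expectation-based definition of approximation within $\alpha_{GW}$.
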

\begin{theorem}[Frieze and Jerrum~\cite{frieze:jerrum:97}] \label{alphak}
$mc_k$ can be approximated within \[\alpha _k \sim 1-\frac{1}{k}+\frac{2\ln k}{k^2}.\]
\end{theorem}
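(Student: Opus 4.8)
The plan is to establish the bound algorithmically, following the semidefinite programming paradigm of Goemans and Williamson but with $k$ parts instead of two. First I would write down a vector relaxation of Max $k$-cut. In an integral solution where vertices in distinct parts are assigned the $k$ vertices of a regular simplex, vectors in the same part have inner product $1$ and vectors in different parts have inner product $-1/(k-1)$; relaxing this, one associates with each vertex $i$ a unit vector $v_i \in \mathbb{R}^n$ subject only to $v_i \cdot v_j \ge -1/(k-1)$, and maximises $\frac{k-1}{k}\sum_{ij \in E} w_{ij}(1 - v_i \cdot v_j)$. This is a semidefinite program, hence solvable to arbitrary precision in polynomial time, and its optimum is an upper bound on $mc_k(G,w)$ (each cut edge contributes exactly its weight in the integral case).

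Next I would specify the rounding: draw $k$ independent standard Gaussian vectors $g_1, \ldots, g_k$ in $\mathbb{R}^n$ and assign vertex $i$ to the part $\arg\max_{1 \le \ell \le k} (g_\ell \cdot v_i)$. By linearity of expectation, the expected weight of the resulting $k$-cut is $\sum_{ij\in E} w_{ij}\, p_k(v_i \cdot v_j)$, where $p_k(\rho)$ is the probability that two vertices with $v_i\cdot v_j = \rho$ are separated. Since for each $\ell$ the pair $(g_\ell\cdot v_i, g_\ell\cdot v_j)$ is bivariate normal with correlation $\rho$, $p_k(\rho)$ is an explicit integral depending on $k$ and $\rho$ only. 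The crux is then the analytic inequality
\[
p_k(\rho) \ \ge\ \alpha_k \cdot \tfrac{k-1}{k}(1-\rho) \qquad \text{for all } \rho \in [-\tfrac{1}{k-1},\,1),
\]
with $\alpha_k$ defined as the infimum over this range of the ratio of the two sides; combining this with the previous paragraph yields $\mathbf{E}[\text{cut weight}] \ge \alpha_k \cdot (\text{SDP optimum}) \ge \alpha_k \cdot mc_k(G,w)$, which is the claimed guarantee.

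What remains — and what I expect to be the main obstacle — is to show that the worst-case $\rho$ approaches $-1/(k-1)$ as $k \to \infty$ and to estimate the Gaussian integral $p_k$ near that boundary point sharply enough to extract the second-order term. This requires a tail/Laplace-type analysis of $\Pr[\arg\max_\ell g_\ell\cdot v_i \ne \arg\max_\ell g_\ell\cdot v_j]$ that is uniform in $\rho$, after which one reads off $\alpha_k = 1 - \frac1k + \frac{2\ln k}{k^2} + o\!\left(\frac{\ln k}{k^2}\right)$. The SDP formulation, its polynomial-time solvability, and the linearity-of-expectation accounting are routine; the delicate point is the uniform control of the Gaussian rounding probability and the precise asymptotic expansion of $\alpha_k$.
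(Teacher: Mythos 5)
This statement is not proved in the paper at all: it is Frieze and Jerrum's theorem, imported by citation and used as a black box (the paper only ever invokes the value $\alpha_k$, e.g.\ via Lemma~\ref{approxresult}). So there is no ``paper proof'' to match; what you have written is essentially a reconstruction of Frieze and Jerrum's original argument, and as a roadmap it is faithful: the simplex-vector SDP relaxation with constraints $v_i\cdot v_j\ge -1/(k-1)$, the fact that $\tfrac{k-1}{k}\sum w_{ij}(1-v_i\cdot v_j)$ upper-bounds $mc_k(G,w)$, the rounding by $k$ independent Gaussian vectors and assignment to $\arg\max_\ell g_\ell\cdot v_i$, the linearity-of-expectation accounting through the separation probability $p_k(\rho)$, and the definition of $\alpha_k$ as the worst-case ratio of $p_k(\rho)$ to $\tfrac{k-1}{k}(1-\rho)$ over $\rho\in[-1/(k-1),1)$ are all correct and are exactly the right setup.

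The gap is that everything quantitative in the theorem is in the step you defer. With your setup, what you have proved is only $\mathbf{E}[\text{cut}]\ge\alpha_k\cdot mc_k(G,w)$ with $\alpha_k$ \emph{defined} as an infimum of a bivariate-Gaussian integral ratio; the content of the statement is the asymptotic evaluation $\alpha_k = 1-\tfrac1k+\tfrac{2\ln k}{k^2}+o(\ln k/k^2)$, and that requires the sharp, $\rho$-uniform estimates of the collision probability $1-p_k(\rho)$ (roughly, that it is $\tfrac1k$ minus a correction of order $\tfrac{\ln k}{k^2}\cdot k\,|\rho|$ in the relevant range $\rho=O(1/k)$, together with ruling out a worse ratio elsewhere in $[-1/(k-1),1)$). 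This is the technically heavy part of Frieze and Jerrum's paper, not a routine afterthought, and your sketch asserts rather than establishes both the location of the worst-case $\rho$ and the second-order term; note also that for small $k$ (e.g.\ $k=2$, the Goemans--Williamson case) the minimizing $\rho$ is interior, so the claim that the minimum sits at or near the endpoint needs a genuine argument. Minor loose ends (ties occur with probability zero; the SDP is only solved to additive precision $\varepsilon$) are easy to patch, but without the asymptotic analysis of $\alpha_k$ the proposal is a correct reduction of the theorem to its hardest lemma, not a proof of it.
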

Here, the relation $\sim$ indicates two expressions whose ratio tends to $1$ as $k \rightarrow \infty$.
We note that de Klerk et al.~\cite{deklerk:etal:04} have presented
the sharpest known bounds on
$\alpha _k$ for small values of $k$; for instance, $\alpha_3 \geq 0.836008$.

Let $v(G),e(G)$ denote the number of vertices and edges in $G$, respectively.
H\aa stad has shown the following:
\begin{theorem}[H\aa stad~\cite{hastad:2005}] \label{twocspapprox}
Let $H$ be a graph. There is an absolute constant $c>0$ such that $mc_H$ can be approximated
within 

\[1-\frac{t(H)}{d^2} \cdot (1-\frac{c}{d^2 \log d})\]

where $d=v(H)$ and $t(H)=d^2-2\cdot e(H)$.
\end{theorem}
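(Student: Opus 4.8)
The plan is to invoke H\aa stad's general machinery for {\sc Max 2-Csp} and instantiate it to $\MCol{H}$, which is the 2-Csp whose single constraint relation is the edge relation $E(H)$ over the domain $V(H)$. First I would recast an instance $(G,w)$ of $\MCol{H}$ as a weighted {\sc Max 2-Csp} instance: the variables are $V(G)$, each ranging over the domain $[d]$ with $d = v(H)$, and for each edge $e = \{x,y\} \in E(G)$ with weight $w(e)$ we put a constraint with payoff $1$ exactly on those pairs $(a,b) \in [d]^2$ that lie in $E(H)$ (i.e.\ $\{a,b\} \in E(H)$), scaled by $w(e)$. The value of an assignment under this 2-Csp is then precisely $w(f)$ in the notation of the paragraph preceding Lemma~\ref{lem:ineq1}, so the optimum equals $\MC_H(G,w)$ and any approximation algorithm for {\sc Max 2-Csp} transfers directly.

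The second ingredient is the quantitative bound H\aa stad proves: for a 2-Csp over domain size $d$ whose constraints have average (normalised) density $\rho$ of accepting pairs, semidefinite programming plus rounding achieves approximation ratio at least $1 - (1-\rho)\bigl(1 - \tfrac{c}{d^2\log d}\bigr)$ for an absolute constant $c>0$ — the key being that a random assignment already does well, and the SDP improves on the ``gap to $1$'' by a multiplicative $(1 - \Theta(1/(d^2\log d)))$ factor uniformly. Here the relevant density for the single relation $E(H)$, viewed as a subset of ordered pairs in $[d]^2$, is $\rho = 2e(H)/d^2$, so $1-\rho = (d^2 - 2e(H))/d^2 = t(H)/d^2$ with $t(H)$ as defined in the statement. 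Substituting gives exactly $1 - \tfrac{t(H)}{d^2}\bigl(1 - \tfrac{c}{d^2\log d}\bigr)$, as claimed. Strictly one should also check the reduction respects edge weights in $\mathbb{Q}^+$; since the rounding analysis is linear in the payoffs this is immediate, and scaling to integer weights (Lemma~\ref{lem:1}, first part) handles rationals cleanly.

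The main obstacle is not the reduction, which is routine, but pinning down the precise form of H\aa stad's theorem that we are allowed to cite: his result is stated for {\sc Max 2-Csp} with a general family of constraint predicates and the constant $c$ is absolute but the dependence on $d$ in the improvement term must be exactly $1/(d^2\log d)$ for the displayed formula to be literally correct. So the real work of the proof is bookkeeping: (i) verify that H\aa stad's hypotheses cover a 2-Csp all of whose constraints use the \emph{same} symmetric predicate $E(H)$, with arbitrary positive weights; (ii) confirm the normalisation so that the ``trivial'' (random-assignment) ratio is $\rho = 2e(H)/d^2$ and the SDP gain is the stated multiplicative factor; (iii) observe $t(H)/d^2 = 1-\rho$. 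Once these are in place the theorem follows by direct substitution, with no further calculation required.
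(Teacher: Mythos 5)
This statement is quoted from H\aa stad~\cite{hastad:2005}, and the paper itself gives no proof of it; your proposal --- viewing $\MCol{H}$ as a weighted {\sc Max 2-Csp} over domain size $d=v(H)$ whose single symmetric predicate is $E(H)$, so that the rejecting-pair fraction is $t(H)/d^2 = 1-2e(H)/d^2$, and then invoking H\aa stad's guarantee of ratio $1-(1-\rho)\bigl(1-\tfrac{c}{d^2\log d}\bigr)$ --- is precisely the instantiation the paper implicitly relies on when citing the result. Your bookkeeping (weights, density $\rho=2e(H)/d^2$, substitution) is correct, and in both your write-up and the paper the actual semidefinite-programming content is, appropriately, deferred to the cited work.
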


\noindent
We will compare the performance of this algorithm on $\MCol{H}$ with the performance of the algorithms
in Theorems~\ref{alphagw} and \ref{alphak} analysed using Lemma~\ref{approxresult} and estimates
of the distance $d$.
This comparison is not entirely fair since Håstad's algorithm was probably not designed 
with the goal of providing optimal results---the goal was to beat random assignments.
However, it is the currently best algorithm that can approximate $\MCol{H}$ 
for arbitrary $H \in {\cal G}$.
For this purpose, we introduce two functions, $FJ_k$ and $\Ha$, such that, if $H$ is a graph, $FJ_k(H)$ denotes the best bound on the approximation guarantee when Frieze and Jerrum's algorithm for {\sc Max $k$-cut} is applied to the problem $\MC _H$, while $\Ha(H)$ is the guarantee when H\aa stad's algorithm is used to approximate $\MC _H$.

To be able to investigate the eventual near-optimality of our approximation method we will rely on the Unique Games Conjecture (UGC). Khot~\cite{khot:2002} suggested this conjecture as a possible direction for proving inapproximability properties of some important constraint satisfaction problems over two variables. We need the following problem only for stating the conjecture:
\begin{definition}
The Unique Label Cover problem $\mathcal{L}(V,W,E,[M],\{\pi^{v,w}\}_{(v,w)\in E})$ is the following problem: Given is a bipartite graph with left side vertices $V$, right side vertices $W$, and a set of edges $E$. The goal is to assign one `label' to every vertex of the graph, where $[M]$ is the set of allowed labels. The labelling is supposed to satisfy certain constraints given by bijective maps $\sigma _{v,w}:[M]\rightarrow [M]$. There is one such map for every edge $(v,w)\in E$. A labelling `satisfies' an edge $(v,w)$ if $\sigma _{v,w}(\mathrm{label}(w))=\mathrm{label}(v)$. The optimum of the unique label cover problem is defined to be the maximum fraction of edges satisfied by any labelling.
\end{definition}

\noindent
Now, UGC is the following:
\begin{conjecture} [Unique Games Conjecture]
For any $\eta, \gamma>0$, there exists a constant $M=M(\eta,\gamma)$ such that it is \cc{NP}-hard to distinguish whether the Unique Label Cover problem with label set of size $M$ has optimum at least $1-\eta$ or at most $\gamma$. 
\end{conjecture}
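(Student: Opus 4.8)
The plan is, in all honesty, not to prove this statement: the Unique Games Conjecture is among the central open problems of computational complexity, and it appears here solely as a working hypothesis under which the near-optimality results later in the paper are derived. A genuine proof attempt would have to proceed through the machinery of probabilistically checkable proofs. Since the statement asserts \cc{NP}-hardness of a gap problem, by the PCP theorem it is equivalent to exhibiting a proof system whose verifier queries two symbols of the proof and accepts or rejects according to a bijective (hence ``unique'') constraint, with the yes/no cases corresponding to Unique Label Cover optima $\geq 1-\eta$ and $\leq \gamma$.

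The natural route would be to start from a known \cc{NP}-hard two-prover gap problem --- Label Cover with the projection property, at soundness $2^{-\Omega(k)}$ obtained by parallel repetition of Raz's verifier --- and then to amplify while forcing the projection constraints to become bijections. First I would attempt a direct reduction from projection Label Cover to Unique Label Cover, but this collapses immediately: a projection constraint identifies many labels, so replacing it by a bijection destroys soundness entirely. One is therefore forced either to design a ``unique-ifying'' gadget or, more plausibly, to build a new two-query test from scratch (for instance via short-code or Grassmann-type constructions) and to carry out its soundness analysis by an invariance-principle argument in the spirit of the majority-is-stablest theorem and dictatorship testing.

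The hard part --- precisely the reason the conjecture remains open --- is this soundness analysis for \emph{unique} two-query tests. Unlike projection games, unique games do not compose in any obvious way, and the available Fourier-analytic tools yield matching \emph{algorithmic} thresholds but not the corresponding hardness. The strongest partial progress known is the resolution of the weaker $2$-to-$2$ (and related half-soundness) variants, which establishes hardness of distinguishing optimum near $1$ from optimum bounded away from $1$ by a fixed constant, but not from an arbitrarily small $\gamma$. Bridging that final gap --- driving the soundness all the way down to an arbitrary constant --- is exactly what is missing, and no technique currently available is known to achieve it. Accordingly, in the remainder of the paper we simply assume the conjecture and trace its consequences for the approximability of $\MCol{H}$.
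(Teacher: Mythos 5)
You are right that this statement is not proved in the paper at all: it is Khot's Unique Games Conjecture, stated only as a hypothesis, and the paper simply declares ``from hereon we assume that UGC is true'' before using the inapproximability consequences of Khot et al. Your decision to treat it as an unproven working assumption (and your survey of why a proof is currently out of reach) matches the paper's handling exactly, so there is nothing to reconcile.
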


\noindent
From hereon we assume that UGC is true, which gives us the following inapproximability results:
\begin{theorem}[Khot et al.~\cite{khot:etal:2007}] \label{maxcuthard}
\begin{itemize}
\item For every $\epsilon > 0$, it is NP-hard to approximate $mc_2$ within $\alpha _{GW}+\epsilon$.
\item It is NP-hard to approximate $mc_k$ within $(1-1/k+(2 \ln k)/k^2+O((\ln \ln k)/k^2))$.
\end{itemize}
\end{theorem}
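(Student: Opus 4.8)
The statement is the (now classical) near-optimal hardness of \prob{Max Cut} and \prob{Max $k$-cut} under UGC, so the natural route is to reproduce the dictatorship-testing reduction of Khot, Kindler, Mossel and O'Donnell. The plan is to exhibit a polynomial-time reduction that maps a Unique Label Cover instance $\mathcal{L}$ with label set $[M]$ to a weighted graph $G_\rho$ carrying one free parameter $\rho\in(-1,0)$, such that (i) if $\mathrm{opt}(\mathcal{L})\ge 1-\eta$ then $mc_2(G_\rho)\ge \tfrac12(1-\rho) - o_\eta(1)$, and (ii) if $\mathrm{opt}(\mathcal{L})\le\gamma$ then $mc_2(G_\rho)\le \tfrac1\pi\arccos\rho + o_{\gamma,M}(1)$. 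Since, after the substitution $\rho=\cos\theta$, the ratio of the completeness fraction to the soundness fraction is $\tfrac{(1-\rho)/2}{\arccos\rho/\pi}$, whose supremum over admissible $\rho$ equals $1/\alpha_{GW}$ by the very definition $\alpha_{GW}=\min_\theta\tfrac{\theta/\pi}{(1-\cos\theta)/2}$, combining (i)--(ii) with UGC yields \NP-hardness of approximating $mc_2$ within $\alpha_{GW}+\epsilon$ for every $\epsilon>0$.

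For the reduction I would attach to each vertex $u$ of the ULC graph a block of $2^M$ \prob{Max Cut} vertices indexed by the hypercube $\{-1,1\}^M$; a cut of $G_\rho$ restricts on each block to a Boolean function $f_u:\{-1,1\}^M\to\{-1,1\}$, interpreted as a (relaxed) long code of a label for $u$. For every right-side vertex $w$, pick two of its neighbours $v,v'$ uniformly, draw $x\in\{-1,1\}^M$ uniformly and a $\rho$-correlated $y$ (each coordinate kept with probability $\tfrac{1+\rho}{2}$, negated otherwise), and put weight on the edge between the point $\sigma_{v,w}(x)$ in block $v$ and $\sigma_{v',w}(y)$ in block $v'$, the bijections acting by permuting hypercube coordinates. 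Completeness is the easy direction: given a labelling $\ell$ satisfying a $(1-\eta)$-fraction of the edges of $\mathcal{L}$, set each $f_u$ to the dictator $x\mapsto x_{\ell(u)}$; whenever both ULC edges at $w$ are satisfied, the $G_\rho$-edge is cut with probability exactly $\Pr[x_{\ell(w)}\ne y_{\ell(w)}]=\tfrac{1-\rho}{2}$, which gives (i) up to the union-bound loss in $\eta$.

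The hard part is soundness, and this is where the bulk of the work (and of the risk) lies. Averaging the edge-cut probability over the neighbours of a fixed $w$ rewrites the cut fraction in terms of noise-stability quantities $\mathrm{Stab}_\rho$ of the coordinate-permuted block functions; a Fourier/influence-decoding argument then shows that unless most blocks $f_v$ possess a coordinate of non-negligible low-degree influence --- which would let one decode a labelling of $\mathcal{L}$ satisfying more than a $\gamma$-fraction of its edges, contradicting the hypothesis --- the relevant stability is driven by functions all of whose influences are tiny. For such (balanced) functions the \emph{Majority Is Stablest} theorem, in the form appropriate to $\rho\in(-1,0)$ and proved via the invariance principle together with Borell's Gaussian isoperimetric inequality, bounds $\mathrm{Stab}_\rho(f)$ from below by roughly $1-\tfrac2\pi\arccos\rho$, and a short computation turns this into $mc_2(G_\rho)\le\tfrac1\pi\arccos\rho+o(1)$, i.e. (ii). I expect that carefully invoking (and where necessary adapting) Majority Is Stablest and the accompanying invariance principle will be the main obstacle; the rest is bookkeeping with the long code and with the orbit structure of the maps $\sigma_{v,w}$.

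For the second bullet the scheme is identical with the Boolean long code over $\{-1,1\}^M$ replaced by a $k$-ary long code over $[k]^M$, the $\pm1$ noise operator replaced by its $k$-ary analogue, and the cut predicate replaced by the `$\ne$' predicate on $k$ colour classes; completeness again uses $k$-ary dictators. Soundness needs the $k$-ary generalisation of Majority Is Stablest, and the ensuing geometric optimisation is exactly the one that Frieze and Jerrum's semidefinite rounding solves in the algorithmic direction, so the soundness fraction matches $1-1/k+(2\ln k)/k^2$ up to the stated $O((\ln\ln k)/k^2)$ error, which is precisely the current gap between the known upper and lower bounds on that optimisation. Combining with UGC then gives the claimed inapproximability of $mc_k$; the only genuinely new analytic ingredient beyond the $k=2$ case is the $k$-ary noise-stability estimate.
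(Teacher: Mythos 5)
This theorem is not proved in the paper at all: it is imported verbatim from Khot, Kindler, Mossel and O'Donnell \cite{khot:etal:2007}, and your outline is essentially that cited proof --- the long-code dictatorship test with negatively correlated noise, completeness via dictators, soundness via influence-decoding plus Majority Is Stablest (invariance principle and Borell's inequality) giving the $\frac{\arccos\rho}{\pi}$ versus $\frac{1-\rho}{2}$ gap, and the $k$-ary analogue with only asymptotic Gaussian-stability estimates, which is exactly why the bound carries the $O((\ln\ln k)/k^2)$ slack. So the proposal is a correct sketch of the same argument as the source the paper relies on, with the heavy analytic ingredients invoked rather than reproved, just as in that source.
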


%
%

\subsection{Sparse Graphs} \label{subsec:sparse}

In this section, we investigate the performance of our method on graphs 
which have relatively few edges, and we see that the {\em girth} of the
graphs plays a central role.
The girth of a graph is the length 
of a shortest cycle contained in the graph. Similarly, the odd girth of a graph gives 
the length of a shortest odd cycle in the graph.

Before we proceed we need some facts about cycle graphs. Note that the odd cycles form a chain in the lattice ${\cal C}_S$ between $K_2$ and $C_3 = K_3$ in the following way:
\[
K_2 \rightarrow \cdots \rightarrow C_{2i+1} \rightarrow C_{2i-1} \rightarrow \cdots \rightarrow C_3 = K_3.
\]
The following lemma gives the values of $s(M,N)$ for pairs of graphs 
in this chain.
The value depends only on the target graph of the homomorphism.

\begin{lemma} \label{lem:ck}
  Let $k < m$ be positive, odd integers. Then,
  \[
  s(K_2, C_{k}) = s(C_{m}, C_{k}) = \frac{k-1}{k}.
  \]
\end{lemma}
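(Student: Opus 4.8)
The plan is to prove $s(K_2, C_k) = s(C_m, C_k) = (k-1)/k$ for odd $k < m$ by sandwiching. Since $C_k$ is edge-transitive (indeed vertex- and edge-transitive), Lemma~\ref{lem:orbits} reduces the problem to computing a single number, namely $s(M, C_k) = mc_M(C_k, 1/e(C_k)) = mc_M(C_k, 1/k)$, where $M$ is either $K_2$ or $C_m$. So the whole statement collapses to evaluating $mc_{K_2}(C_k, 1/k)$ and $mc_{C_m}(C_k, 1/k)$ with the uniform weight function on the $k$ edges of $C_k$.

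First I would handle $mc_{K_2}(C_k, 1/k)$: the maximum bipartite subgraph of an odd cycle $C_k$ is obtained by deleting exactly one edge (an odd cycle is not bipartite, and removing any single edge yields a path, which is bipartite), so the maximum number of edges kept is $k-1$, giving $mc_{K_2}(C_k, 1/k) = (k-1)/k$. Next, for $mc_{C_m}(C_k, 1/k)$ with $m > k$ odd: since $C_k \not\to C_m$ (the standard fact that odd cycles get "shorter" under homomorphism — $C_a \to C_b$ for odd $a, b$ iff $a \geq b$), the identity map on $C_k$ is not a homomorphism to $C_m$, so again at least one edge must be dropped; conversely, deleting one edge of $C_k$ gives a path $P_k$, which maps homomorphically into $C_m$ (any path maps into any edge, hence into $C_m$). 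Therefore $mc_{C_m}(C_k, 1/k) = (k-1)/k$ as well. Combining via Lemma~\ref{lem:orbits} gives both equalities.

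Alternatively — and this is perhaps cleaner for exposition — I would use Lemma~\ref{lemma:homdist} together with the chain $K_2 \to C_m \to C_k$ (valid since $m > k$ are odd): this gives $s(K_2, C_k) \geq s(K_2, C_k)$ trivially, but more usefully $s(C_m, C_k) \geq s(K_2, C_k)$ from the sandwich $K_2 \to C_m \to C_k$ applied with $H = C_m$, $N = C_k$. For the reverse inequality $s(C_m, C_k) \leq s(K_2, C_k)$, I would note $s(C_m, C_k) = mc_{C_m}(C_k, 1/k) \leq (k-1)/k$ since $C_k$ itself is not $C_m$-colourable, so some edge must be removed. This pins $s(C_m, C_k) = (k-1)/k = s(K_2, C_k)$ once the $K_2$ value is established as above.

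The only genuinely non-routine ingredient is the homomorphism fact about odd cycles: that $C_k \not\to C_m$ when $k < m$ are odd (equivalently, odd girth is non-increasing under homomorphisms, so a homomorphism $C_k \to C_m$ would force $C_m$ to contain an odd closed walk of length $k < m$, hence an odd cycle shorter than $m$, a contradiction). I expect this to be the main point to state carefully; the rest is bookkeeping with edge-transitivity and Lemma~\ref{lem:orbits}. I would also double-check the edge count $e(C_k) = k$ and that deleting one edge genuinely suffices (a path on $k$ vertices is bipartite and maps to any $C_m$), which are both immediate.
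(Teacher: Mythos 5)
Your proposal is correct and follows essentially the same route as the paper's proof: invoke edge-transitivity of $C_k$ to reduce via Lemma~\ref{lem:orbits} to computing $mc_{K_2}(C_k,1/k)$ and $mc_{C_m}(C_k,1/k)$, then observe that $C_k$ maps to neither $K_2$ nor $C_m$ (odd girth cannot decrease the target's odd girth), while deleting a single edge leaves a path that maps into both, giving $(k-1)/k$. The alternative sandwich argument via Lemma~\ref{lemma:homdist} is a fine extra remark but not needed.
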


\begin{proof}
  Note that $C_{2k+1} \not\rightarrow K_2$ and $C_{2k+1} \not\rightarrow C_{2m+1}$. However, after removing one edge from $C_{2k+1}$, the remaining subgraph is isomorphic to the path $P_{2k+1}$ which in turn is embeddable in both $K_2$ and $C_{2m+1}$.
  Since $C_{2k+1}$ is edge-transitive, the result follows from Lemma~\ref{lem:orbits}.
  \qed
\end{proof}

With Lemma~\ref{lem:ck} at hand, we can prove the following:
\begin{proposition} \label{cycleresult}
Let $k \geq 3$ be odd. Then, $FJ_2(C_k)\geq \frac{k-1}{k} \cdot \alpha_{GW}$ and $\Ha(C_k)=\frac{2}{k}+\frac{c}{k^2\log k}-\frac{2c}{k^3\log k}$. Furthermore, $\MC _{C_k}$ cannot be approximated
within $\frac{k}{k-1} \cdot \alpha_{GW}+\epsilon$ for any $\epsilon > 0$.
\end{proposition}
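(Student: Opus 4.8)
The plan is to derive each of the three claims from the tools already assembled in the paper, using Lemma~\ref{lem:ck} as the source of the distance estimates. First I would record that, since $C_k$ is an odd cycle with $k \geq 3$, Lemma~\ref{lem:ck} gives $s(K_2, C_k) = (k-1)/k$, and since trivially $C_k \rightarrow C_k$ we also have $s(C_k, K_2) \le 1$; more to the point, for the metric we need $d(K_2, C_k) = 1 - s(K_2,C_k)\cdot s(C_k,K_2)$. Here one must be a little careful: $C_k \not\rightarrow K_2$ for odd $k$, so $s(C_k, K_2) < 1$ in general. However, the relevant quantity for applying Lemma~\ref{approxresult} in the direction ``approximate $\MC_{C_k}$ using an algorithm for $\MC_{K_2}$'' is $s(C_k, K_2)$ multiplied into $\alpha_{GW}$ only through $1 - d(K_2, C_k)$; re-reading the proof of Lemma~\ref{approxresult}, the bound obtained for $\MC_N$ from an $\alpha$-algorithm for $\MC_M$ is $\alpha \cdot s(N,M) \cdot s(M,N) = \alpha(1-d(M,N))$, and in fact the chain of inequalities there only ever uses $s(N,M)$ on the lower-bounding side. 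So $FJ_2(C_k) \geq \alpha_{GW}\cdot s(C_k, K_2)$, and the claim $FJ_2(C_k) \geq \frac{k-1}{k}\,\alpha_{GW}$ will follow once I check that $s(C_k, K_2) \geq (k-1)/k$, which I get from Lemma~\ref{lemma:homdist} applied to $K_2 \rightarrow C_k \rightarrow C_k$ together with $s(K_2, C_k)=(k-1)/k$; wait—more directly, one wants $s(C_k, K_2)$, and since $C_k$ is not bipartite this needs its own argument. The cleanest route: every subgraph of a weighted $K_2$-instance realizing $\MC_{K_2}$ is bipartite, and a bipartite graph maps to $C_k$ (it maps to $K_2 \rightarrow C_k$), so $\MC_{C_k}(G,w) \geq \MC_{K_2}(G,w)$, giving $s(C_k,K_2) = 1$. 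Hence $1 - d(K_2,C_k) = s(K_2,C_k) = (k-1)/k$ and $FJ_2(C_k) \geq \frac{k-1}{k}\,\alpha_{GW}$ is immediate from Lemma~\ref{approxresult}.

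For the H\aa stad bound $\Ha(C_k)$, I would simply instantiate Theorem~\ref{twocspapprox} with $H = C_k$: here $d = v(C_k) = k$ and $e(C_k) = k$, so $t(C_k) = d^2 - 2e(C_k) = k^2 - 2k$. Plugging in,
\[
\Ha(C_k) = 1 - \frac{k^2 - 2k}{k^2}\Bigl(1 - \frac{c}{k^2 \log k}\Bigr)
= 1 - \Bigl(1 - \frac{2}{k}\Bigr)\Bigl(1 - \frac{c}{k^2 \log k}\Bigr),
\]
and expanding the product gives $\Ha(C_k) = \frac{2}{k} + \frac{c}{k^2 \log k} - \frac{2c}{k^3 \log k}$, exactly as stated. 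This is just arithmetic, so I expect no difficulty here beyond bookkeeping.

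For the inapproximability claim, I would use the second part of Lemma~\ref{approxresult} with $N = C_k$ and $K = K_2$: it says that if it is \NP-hard to approximate $\MC_{K_2}$ within $\beta$, then $\MC_{C_k}$ is not approximable within $\beta / (1 - d(C_k, K_2))$ unless $\P = \NP$. By symmetry of $d$ and the computation above, $1 - d(C_k, K_2) = (k-1)/k$. Now Theorem~\ref{maxcuthard} (under UGC) says that for every $\epsilon' > 0$ it is \NP-hard to approximate $mc_2$ within $\alpha_{GW} + \epsilon'$. Taking $\beta = \alpha_{GW} + \epsilon'$ gives that $\MC_{C_k}$ is not approximable within $(\alpha_{GW} + \epsilon')\cdot \frac{k}{k-1}$; since $\epsilon'$ ranges over all positive reals, $\epsilon := \epsilon' k/(k-1)$ does too, yielding the stated ``not within $\frac{k}{k-1}\alpha_{GW} + \epsilon$ for any $\epsilon > 0$.''

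The main obstacle, and the only place that genuinely requires thought rather than substitution, is pinning down $s(C_k, K_2)$ — equivalently, confirming $1 - d(K_2, C_k) = (k-1)/k$ rather than something smaller. The subtlety is that $d$ is defined via a product of two infima, one in each direction, and $C_k \not\rightarrow K_2$ for odd $k$ might naively suggest the ``$C_k$ over $K_2$'' infimum is strictly below $1$. The resolution is the observation that optimal $\MC_{K_2}$-subgraphs are bipartite and bipartite graphs do admit homomorphisms to $C_k$ (via $K_2$), so $\MC_{C_k}(G,w) \ge \MC_{K_2}(G,w)$ pointwise, forcing $s(C_k, K_2) = 1$; combined with $s(K_2, C_k) = (k-1)/k$ from Lemma~\ref{lem:ck}, this nails $d(K_2, C_k) = 1/k$. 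Once that is in place, all three assertions drop out of Lemmas~\ref{approxresult} and Theorems~\ref{alphagw}, \ref{twocspapprox}, \ref{maxcuthard} with only routine algebra.
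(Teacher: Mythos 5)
Your proof is correct and takes essentially the same route as the paper: $s(K_2,C_k)=(k-1)/k$ from Lemma~\ref{lem:ck} is fed into both directions of Lemma~\ref{approxresult} (with Theorem~\ref{maxcuthard} supplying the UGC-based hardness and an $\epsilon$-rescaling at the end), and the $\Ha(C_k)$ value is direct substitution into Theorem~\ref{twocspapprox}. Your brief detour about which factor matters is resolved correctly ($s(C_k,K_2)=1$ because $K_2\rightarrow C_k$, a point the paper leaves implicit in the remark that $1-d(M,N)=s(M,N)$ when $M\rightarrow N$), so nothing essential is missing.
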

\begin{proof}
  From Lemma~\ref{lem:ck} we see that $s(K_2,C_k) = \frac{k-1}{k}$
  which implies (using Lemma~\ref{approxresult}) that
  $FJ_2(C_k) \geq \frac{k-1}{k} \cdot \alpha_{GW}$.
  Furthermore, $\MC_2$ cannot be approximated within $\alpha_{GW}+\epsilon'$
  for any $\epsilon' > 0$.
  From the second part of Lemma~\ref{approxresult}, we get that
  $\MC_{C_k}$ cannot be approximated within
  $\frac{k}{k-1} \cdot (\alpha_{GW} + \epsilon')$ for any $\epsilon'$.
  With $\epsilon' = \epsilon \cdot \frac{k-1}{k}$ the result follows.

Finally, we see that
  \[\Ha(C_k)
  = 1-\frac{k^2-2k}{k^2}\cdot\left(1-\frac{c}{k^2\log k}\right)
  = \frac{ck+2k^2\log k-2c}{k^3\log k} =
  \]

  \[
  = \frac{2}{k}+\frac{c}{k^2\log k}-\frac{2c}{k^3\log k}.
  \]
\end{proof}
H\aa stad's algorithm does not perform particularly well on sparse graphs; this is 
reflected by its performance on cycle graphs $C_k$ where the approximation guarantee tends to zero when $k\rightarrow \infty$. We will see that this trend is apparent for all graph types studied in this section.

Now we can continue with a result on a class of graphs with large girth:
\begin{proposition} \label{prop:lai}
  Let $m > k \geq 4$. If $H$ is a graph with odd girth $g \geq 2k+1$ and minimum
  degree $\geq \frac{2m-1}{2(k+1)}$, then $FJ_2(H)\geq\frac{2k}{2k+1} \cdot \alpha_{GW}$ and $\MC _H$ cannot be approximated within
  $\frac{2k+1}{2k} \cdot \alpha_{GW} + \epsilon$ for any $\epsilon > 0$. Asymptotically, $\Ha(H)$ is bounded by $\frac{c}{n^2\log n}+\frac{2(n^{g/(g-1)})^3}{n^4n^{1/(g-1)}}-\frac{2n^{g/(g-1)}n^{1/(g-1)}c}{n^4\log n}$, where $n=v(H)$.
\end{proposition}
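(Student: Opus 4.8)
The plan is to reduce everything to the cycle case via the homomorphic sandwich Lemma~\ref{lemma:homdist}, together with the values computed in Lemma~\ref{lem:ck} and the inapproximability consequences of the Unique Games Conjecture already extracted in Proposition~\ref{cycleresult}. First I would establish that the hypotheses on $H$ force $K_2 \rightarrow H \rightarrow C_{2k+1}$. The left arrow is immediate since $H$ has an edge (indeed it is non-bipartite, having an odd cycle). The right arrow is the crux: I claim that a graph of odd girth $\geq 2k+1$ and minimum degree $\geq \frac{2m-1}{2(k+1)}$ admits a homomorphism to the odd cycle $C_{2k+1}$. This is exactly the shape of a known result in the graph-homomorphism literature (of the type due to H\"aggkvist--Hell and refined by others): sufficiently high minimum degree relative to the number of vertices, combined with a guarantee that there are no short odd cycles, yields a homomorphism to the shortest odd cycle allowed by the girth. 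I would invoke such a theorem to obtain $H \rightarrow C_{2k+1}$.

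Given the sandwich $K_2 \rightarrow H \rightarrow C_{2k+1}$, Lemma~\ref{lemma:homdist} gives $s(K_2,H) \geq s(K_2,C_{2k+1})$, and by Lemma~\ref{lem:ck} the latter equals $\frac{2k}{2k+1}$. Feeding this into the first part of Lemma~\ref{approxresult} with $M = K_2$ (for which the Goemans--Williamson algorithm achieves ratio $\alpha_{GW}$, Theorem~\ref{alphagw}) yields $FJ_2(H) \geq s(K_2,H)\cdot \alpha_{GW} \geq \frac{2k}{2k+1}\cdot \alpha_{GW}$. For the inapproximability half, the sandwich also gives $s(H,C_{2k+1}) \geq s(K_2,C_{2k+1}) = \frac{2k}{2k+1}$, hence $1 - d(H,C_{2k+1}) \geq \frac{2k}{2k+1}$; but actually the cleanest route is to apply the second part of Lemma~\ref{approxresult} with $K = K_2$: since it is \cc{NP}-hard (under UGC) to approximate $mc_2$ within $\alpha_{GW} + \epsilon'$, and $1 - d(H,K_2) \geq s(K_2,H)\cdot s(H,K_2)$, one needs $s(H,K_2) = 1$, which holds because $K_2 \rightarrow H$. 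So $1 - d(H,K_2) \geq s(K_2,H) \geq \frac{2k}{2k+1}$, and Lemma~\ref{approxresult} gives that $\MC_H$ is not approximable within $(\alpha_{GW}+\epsilon')\big/\frac{2k}{2k+1} = \frac{2k+1}{2k}(\alpha_{GW}+\epsilon')$; rescaling $\epsilon'$ produces the stated bound $\frac{2k+1}{2k}\cdot\alpha_{GW} + \epsilon$.

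For the asymptotic estimate of $\Ha(H)$, I would simply substitute into Theorem~\ref{twocspapprox}. With $d = v(H) = n$ and $t(H) = n^2 - 2e(H)$, the guarantee is $1 - \frac{t(H)}{n^2}\big(1 - \frac{c}{n^2\log n}\big)$. The odd girth condition $g \geq 2k+1$ forces $e(H) = O(n^{1 + 1/(g-1)})$ by the Moore-type bound, i.e. $e(H)$ is of order $n^{g/(g-1)}$ up to constants, so $t(H) \approx n^2 - 2n^{g/(g-1)}$. Expanding $1 - \frac{t(H)}{n^2}\big(1 - \frac{c}{n^2\log n}\big)$ and collecting terms gives $\frac{c}{n^2\log n} + \frac{2(n^{g/(g-1)})^3}{n^4 n^{1/(g-1)}} - \frac{2 n^{g/(g-1)} n^{1/(g-1)} c}{n^4 \log n}$, which matches the claimed expression.

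The main obstacle is clearly the step $H \rightarrow C_{2k+1}$: one has to pin down the precise minimum-degree-versus-girth homomorphism theorem being used and verify that $\delta(H) \geq \frac{2m-1}{2(k+1)}$ is exactly the threshold it requires (this is presumably where the auxiliary parameter $m$ and the constraint $m > k \geq 4$ enter). Everything after that is a mechanical application of the lemmas in Section~\ref{partI} plus a bookkeeping substitution into H\aa stad's formula.
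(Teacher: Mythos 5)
Your overall route is the same as the paper's: the paper also sandwiches $K_2 \rightarrow H \rightarrow C_{2k+1}$, where the right-hand arrow is exactly the minimum-degree/odd-girth homomorphism theorem you gesture at (the paper cites Lai and Liu~\cite{Lai:Liu:2000}, whose hypothesis is precisely odd girth $\geq 2k+1$ and minimum degree $\geq \frac{2m-1}{2(k+1)}$), and then it feeds $s(K_2,C_{2k+1})=\frac{2k}{2k+1}$ from Lemma~\ref{lem:ck} through Lemma~\ref{approxresult} for both the $FJ_2$ lower bound and the UGC-based upper bound, just as you do. So the approximability half of your argument is fine, modulo the fact that you leave the crux homomorphism step as an unnamed citation to be verified; since the paper itself only cites it, that is a presentational rather than a mathematical difference.

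The genuine error is in your treatment of $\Ha(H)$. You assert $e(H)=O(n^{1+1/(g-1)})=O(n^{g/(g-1)})$ ``by the Moore-type bound'' and claim that substituting this into Theorem~\ref{twocspapprox} reproduces the stated expression. Neither is right. The Moore-type count (minimum degree $\delta$ and no cycle shorter than $g$ forces $n\gtrsim \delta^{(g-1)/2}$, hence $\delta = O(n^{2/(g-1)})$) gives $e(H)=O(n^{1+2/(g-1)})=O(n^{(g+1)/(g-1)})$, which is the Dutton--Brigham bound the paper actually uses. With that exponent one gets $\Ha(H) \sim \frac{2e(H)}{n^2}+\frac{c}{n^2\log n}-\frac{2e(H)c}{n^4\log n}$ with $\frac{2e(H)}{n^2}\sim 2n^{(3-g)/(g-1)} = \frac{2(n^{g/(g-1)})^3}{n^4 n^{1/(g-1)}}$ and $\frac{2e(H)c}{n^4\log n}\sim\frac{2n^{g/(g-1)}n^{1/(g-1)}c}{n^4\log n}$, matching the proposition. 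With your exponent, both of these terms come out as $2n^{(2-g)/(g-1)}$ and $\frac{2n^{g/(g-1)}c}{n^4\log n}$ respectively, i.e.\ smaller by a factor $n^{1/(g-1)}$, so your claimed ``match'' does not hold; you need the $2/(g-1)$ exponent to recover the stated asymptotics.
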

\begin{proof}
Lai \& Liu~\cite{Lai:Liu:2000} have proved that if $H$ is a graph with
odd girth $\geq 2k+1$ and minimum
degree $\geq \frac{2m-1}{2(k+1)}$, then there exists a homomorphism from
$H$ to $C_{2k+1}$.
Thus, $K_2 \rightarrow H \rightarrow C_{2k+1}$ which implies that
$1-d(K_2,H) \geq 1-d(K_2,C_{2k+1}) = \frac{2k}{2k+1}$.
By Lemma~\ref{approxresult}, $FJ_2(H) \geq \frac{2k}{2k+1} \cdot \alpha_{GW}$, but $\MC _H$ cannot be approximated
within $\frac{2k+1}{2k} \cdot \alpha_{GW} +\epsilon$ for any $\epsilon > 0$.

Dutton and Brigham~\cite{dutton:brigham:91} show that one upper bound on $e(H)$ has asymptotic order $n^{1+2/(g-1)}$. This lets us say that
\[
 \Ha(H) \sim
1-\frac{n^2-2\cdot n^{1+2/(g-1)}}{n^2}\cdot\left(1-\frac{c}{n^2\log n}\right) =
\]

\[
= \frac{cn^2+2n^{(3g-1)/(g-1)}\log n-2n^{(g+1)/(g-1)}c}{n^4\log n} =
\]

\[
= \frac{c}{n^2\log n}+\frac{2(n^{g/(g-1)})^3}{n^4n^{1/(g-1)}}-\frac{2n^{g/(g-1)}n^{1/(g-1)}c}{n^4\log n}.
\]
\qed
\end{proof}

\noindent
If we restrict ourselves to planar graphs, then it is possible to show the following:
\begin{proposition} \label{prop:borodin}
Let $H$ be a planar graph with girth at least $g=\frac{20k-2}{3}$. If $v(H)=n$, then
$FJ_2(H) \geq \frac{2k}{2k+1} \cdot \alpha_{GW}$ and $\Ha(H) \leq \frac{6}{n}-\frac{12}{n^2}+\frac{c}{n^2\log n}-\frac{6c}{n^3\log n}+\frac{12c}{n^4\log n}$. $\MC_H$ cannot be approximated
within $\frac{2k+1}{2k} \cdot \alpha_{GW} +\epsilon$ for any $\epsilon > 0$.
\end{proposition}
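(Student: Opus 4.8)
The plan is to follow the template of the proof of Proposition~\ref{prop:lai}, changing only which homomorphism result supplies the ``upper'' end of a homomorphic sandwich $K_2 \rightarrow H \rightarrow C_{2k+1}$. For planar graphs the appropriate input is a theorem of Borodin, Kim, Kostochka and West~\cite{borodin:etal:2004}: every planar graph of girth at least $\tfrac{20k-2}{3}$ admits a homomorphism to the odd cycle $C_{2k+1}$. Applying this to our $H$, and noting that $K_2 \rightarrow H$ (if $H$ has no edge, then $\MCol{H}$ is trivial), we obtain $K_2 \rightarrow H \rightarrow C_{2k+1}$.

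Given this chain, Lemma~\ref{lemma:homdist} yields $s(K_2,H) \geq s(K_2,C_{2k+1})$, and Lemma~\ref{lem:ck} (with the parameter ``$k$'' of that lemma taken to be $2k+1$) evaluates the right-hand side as $\tfrac{2k}{2k+1}$. Since $K_2 \rightarrow H$, we have $1-d(K_2,H) = s(K_2,H) \geq \tfrac{2k}{2k+1}$ (recall that $M \rightarrow N$ implies $1-d(M,N) = s(M,N)$, as observed just after Lemma~\ref{approxresult}). The approximability claim now follows from the first part of Lemma~\ref{approxresult} with $M = K_2$: since $mc_2$ is approximable within $\alpha_{GW}$ by Theorem~\ref{alphagw}, $\MC_H$ is approximable within $(1-d(K_2,H))\cdot\alpha_{GW} \geq \tfrac{2k}{2k+1}\cdot\alpha_{GW}$, i.e.\ $FJ_2(H) \geq \tfrac{2k}{2k+1}\cdot\alpha_{GW}$. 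For the inapproximability claim, Theorem~\ref{maxcuthard} asserts (under UGC) that $mc_2$ cannot be approximated within $\alpha_{GW}+\epsilon'$ for any $\epsilon'>0$; the second part of Lemma~\ref{approxresult} with $N=H$, $K=K_2$ (and $d(H,K_2)=d(K_2,H)$ by symmetry) then rules out approximating $\MC_H$ within $(\alpha_{GW}+\epsilon')/(1-d(K_2,H))$, and since $1-d(K_2,H)\geq\tfrac{2k}{2k+1}$ this bound is at most $\tfrac{2k+1}{2k}(\alpha_{GW}+\epsilon')$. Choosing $\epsilon'=\tfrac{2k}{2k+1}\epsilon$ makes it at most $\tfrac{2k+1}{2k}\alpha_{GW}+\epsilon$, so $\MC_H$ cannot be approximated within $\tfrac{2k+1}{2k}\alpha_{GW}+\epsilon$, as required.

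For the bound on $\Ha(H)$ I would substitute the standard planar edge bound $e(H) \leq 3(n-2)$ into the guarantee of Theorem~\ref{twocspapprox}. Writing $d=v(H)=n$ and $t(H)=n^2-2e(H)$, that guarantee equals $1-(1-\tfrac{2e(H)}{n^2})(1-\tfrac{c}{n^2\log n})$, which for $n$ large enough (so that $1-\tfrac{c}{n^2\log n}>0$) is monotone increasing in $e(H)$. Plugging in $e(H)=3n-6$ and expanding $1-(1-a)(1-b)=a+b-ab$ with $a=\tfrac{6}{n}-\tfrac{12}{n^2}$ and $b=\tfrac{c}{n^2\log n}$ gives precisely $\tfrac{6}{n}-\tfrac{12}{n^2}+\tfrac{c}{n^2\log n}-\tfrac{6c}{n^3\log n}+\tfrac{12c}{n^4\log n}$. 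One could instead use the sharper girth-dependent bound $e(H)\leq\tfrac{g}{g-2}(n-2)$, which only decreases the value; the generic planar bound already produces the claimed expression.

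I expect the only genuinely external step to be locating the planar-to-$C_{2k+1}$ homomorphism theorem with the exact girth threshold $\tfrac{20k-2}{3}$ and checking its hypotheses for the relevant range of $k$; note that, in contrast to Proposition~\ref{prop:lai}, no minimum-degree condition is needed, planarity playing that role. Everything else is routine bookkeeping parallel to Proposition~\ref{prop:lai}: the two invocations of Lemma~\ref{approxresult}, the direction-of-inequality care in the inapproximability part, and the elementary algebra behind the $\Ha(H)$ estimate.
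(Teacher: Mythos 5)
Your proposal is correct and follows essentially the same route as the paper: the paper likewise invokes the Borodin et al.\ result that such planar $H$ are $(2+\frac{1}{k})$-colourable (equivalently $H \rightarrow C_{2k+1}$), reduces to the argument of Proposition~\ref{prop:lai} via the sandwich $K_2 \rightarrow H \rightarrow C_{2k+1}$, and obtains the $\Ha(H)$ bound by substituting $e(H) \leq 3n-6$ into Theorem~\ref{twocspapprox}, even noting (as you do) that the sharper girth-dependent edge bound is available but unnecessary. No gaps.
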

\begin{proof}
Borodin et al.~\cite{Borodin:etal:jctb2004} have proved that $H$ is
$(2+\frac{1}{k})$-colourable which is equivalent to saying that
there exists a homomorphism from $H$ to $C_{2k+1}$.
The proof proceeds as for Proposition~\ref{prop:lai}.

The planar graph $H$ cannot have more than $3n-6$ edges so $\Ha(H)$ is bounded from above by
\[
1-\frac{n^2-2(3n-6)}{n^2}\cdot \left(1-\frac{c}{n^2\log n}\right) =
\]

\[
= \frac{cn^2-6nc+12c+6n^3\log n-12n^2\log n}{n^4\log n} =
\]

\[
= \frac{6}{n}-\frac{12}{n^2}+\frac{c}{n^2\log n}-\frac{6c}{n^3\log n}+\frac{12c}{n^4\log n}.
\]
(In fact, $H$ contains no more than $\max \{g(n-2)/(g-2),n-1\}$ edges, but using this only makes for a more convoluted expression to study.)
\qed
\end{proof}

Proposition~\ref{prop:borodin} can be strengthened and extended in different 
ways: one is to 
consider
a result by 
Dvo\v{r}\'{a}k et al.~\cite{Dvorak:etal:sidma}.
They have proved that every planar graph $H$ of odd-girth at least 9 is 
homomorphic
to the Petersen graph $P$. 
The Petersen graph is edge-transitive and
it is known (cf. \cite{Berman:Zhang:dm2003}) that the bipartite density of
$P$ is $4/5$ or, in other words, $s(K_2,P)=4/5$.
Consequently, $\MC_H$ can be approximated within
$\frac{4}{5} \cdot \alpha_{GW}$
but not within
$\frac{4}{5} \cdot \alpha_{GW} +\epsilon$ for any $\epsilon > 0$.
This is better than Proposition~\ref{prop:borodin} for planar graphs with
girth strictly less than 13.

Another way of extending Proposition~\ref{prop:borodin} is to consider graphs embeddable 
on higher-genus surfaces. For instance, the lemma is true for graphs embeddable
on the projective plane, and
it is also true for graphs of girth {\em strictly} greater than
$\frac{20k-2}{3}$ whenever the graphs are embeddable on the torus or
Klein bottle. These bounds are direct consequences of results in
Borodin et al.

%

We conclude the section by looking at a class of graphs that have small girth. 
Let $0 < \beta < 1$, be the approximation threshold for $\MC_3$, i.e.
$\MC_3$ is approximable within $\beta$ but not within $\beta + \epsilon$
for any $\epsilon > 0$. Currently, we know that $\alpha_3 \leq 0.836008 \leq \beta \leq \frac{102}{103}$
\cite{deklerk:etal:04,kann:etal:97}. 
The wheel graphs $W_k$ from Example~\ref{ex:wheel} are homomorphically equivalent to $K_3$ for odd $k$ and
we conclude (by Lemma~\ref{approxresult}) that
$\MC_{W_k}$ has the same approximability properties as $\MC_3$ in this case.
For even $k \geq 6$, $W_k$ is not homomorphically equivalent to $K_3$, though.

\begin{proposition}
  For $k \geq 6$ and even, $FJ_3(W_k) \geq \alpha_3 \cdot \frac{2k-3}{2k-2}$ but $\MC _{W_k}$ is not approximable within $\beta \cdot \frac{2k-2}{2k-3}$. $\Ha(W_k) = \frac{4}{k}-\frac{4}{k^2}+\frac{c}{k^2\log k}-\frac{4c}{k^3\log k}+\frac{4c}{k^4\log k}$.
\end{proposition}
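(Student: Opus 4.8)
The plan is to follow exactly the template used in Propositions~\ref{cycleresult}, \ref{prop:lai} and \ref{prop:borodin}, but now applied to the wheel graphs and using $K_3$ (rather than $K_2$) as the reference problem. From Example~\ref{ex:wheel} we already have the exact value $s(K_3,W_k)=\frac{2k-3}{2k-2}$ for even $k\geq 6$, so the metric input is in place and no new distance computation is needed.

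First I would establish the positive approximability bound. Since $s(K_3,W_k)=\frac{2k-3}{2k-2}$ and $\MC_3=\MC_{K_3}$ can be approximated within $\alpha_3$ by the Frieze--Jerrum algorithm (Theorem~\ref{alphak}, with the de Klerk et al.\ bound $\alpha_3\geq 0.836008$), Lemma~\ref{approxresult} immediately gives $\MC_{W_k}$ approximable within $\alpha_3\cdot\frac{2k-3}{2k-2}$, i.e.\ $FJ_3(W_k)\geq \alpha_3\cdot\frac{2k-3}{2k-2}$. For the inapproximability side, recall that by definition $\MC_3$ cannot be approximated within $\beta+\epsilon'$ for any $\epsilon'>0$; applying the second part of Lemma~\ref{approxresult} with $M=W_k$ playing the role of ``$N$'' and $K=K_3$, together with $1-d(W_k,K_3)=s(K_3,W_k)$ (valid because $W_k\to K_3$), yields that $\MC_{W_k}$ is not approximable within $(\beta+\epsilon')\cdot\frac{2k-2}{2k-3}$ unless $\P=\NP$; rescaling $\epsilon'$ gives the stated bound $\beta\cdot\frac{2k-2}{2k-3}$, possibly up to an $\epsilon$ as in the cited propositions.

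Finally I would compute $\Ha(W_k)$ directly from Theorem~\ref{twocspapprox}. Here $d=v(W_k)=k$ and $e(W_k)=2(k-1)$, so $t(W_k)=k^2-2e(W_k)=k^2-4k+4=(k-2)^2$. Plugging in,
\[
\Ha(W_k)=1-\frac{(k-2)^2}{k^2}\cdot\left(1-\frac{c}{k^2\log k}\right)
=\frac{4k-4}{k^2}+\frac{(k-2)^2 c}{k^4\log k},
\]
and expanding $(k-2)^2=k^2-4k+4$ in the second term gives
$\frac{4}{k}-\frac{4}{k^2}+\frac{c}{k^2\log k}-\frac{4c}{k^3\log k}+\frac{4c}{k^4\log k}$, which is the claimed expression. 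None of these steps is genuinely difficult: the only mild subtlety is bookkeeping the $\epsilon$ in the hardness statement (exactly as handled at the end of the proof of Proposition~\ref{cycleresult}), and making sure the solution set $F=\{f,g\}$ used in Example~\ref{ex:wheel} really dominates all other solutions, which is why the example restricts to even $k\geq 6$ (for odd $k$, $W_k\HOMEQ K_3$ and the statement is vacuous). So the main ``obstacle'' is purely the algebraic simplification of $\Ha(W_k)$, which I have sketched above.
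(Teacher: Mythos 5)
Your proposal follows the paper's own argument essentially verbatim: Example~\ref{ex:wheel} supplies $K_3\rightarrow W_k$ and $s(K_3,W_k)=\frac{2k-3}{2k-2}$, Lemma~\ref{approxresult} gives both the $FJ_3$ lower bound and the inapproximability bound, and $\Ha(W_k)$ is obtained by the same substitution $d=k$, $e(W_k)=2(k-1)$ into Theorem~\ref{twocspapprox}; your bookkeeping of the hidden $\epsilon$ in the hardness claim is, if anything, more careful than the paper's. One small correction: justifying $1-d(W_k,K_3)=s(K_3,W_k)$ by ``$W_k\rightarrow K_3$'' is wrong, since for even $k\geq 6$ we have $\chi(W_k)=4$ and hence $W_k\not\rightarrow K_3$ (which is precisely why this case is nontrivial); the identity instead follows from $K_3\rightarrow W_k$, which gives $s(W_k,K_3)=1$ and thus $1-d(W_k,K_3)=s(W_k,K_3)\cdot s(K_3,W_k)=s(K_3,W_k)$.
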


\begin{proof}
  We know from Example~\ref{ex:wheel} that $K_3 \rightarrow W_k$ and 
  $s(K_3,W_k) = \frac{2k-3}{2k-2}$. The first part of the result follows by an application of Lemma~\ref{approxresult}.
  \[
  \Ha(W_k) =
  1-\frac{t(W_k)}{d^2}\cdot \left(1-\frac{c}{d^2\log d}\right) = 
  /d=k,e(W_k)=2(k-1)/ = 
  \]
  
  \[
  = 1-\frac{k^2-4(k-1)}{k^2}\cdot\left(1-\frac{c}{k^2\log k}\right) =
  \]
  
  \[
  = \frac{k^2c+4k^3\log k-4kc-4k^2\log k+4c}{k^4\log k} = 
  \]
  
  \[
  = \frac{4}{k}-\frac{4}{k^2}+\frac{c}{k^2\log k}-\frac{4c}{k^3\log 
    k}+\frac{4c}{k^4\log k}
  \]
  \qed
\end{proof}

\noindent
We see that $FJ_3(W_k) \rightarrow \alpha_3$ when $k \rightarrow \infty$, while $\Ha(W_k)$ tends to 0.

\subsection{Dense and Random Graphs} \label{subsec:dense}

We will now study {\em dense} graphs, i.e. graphs $H$ containing $\Theta(v(H)^2)$ edges.
For a graph $H$ on $n$ vertices, we obviously have $H\rightarrow K_n$. If we assume that 
$\omega (H)\geq r$, then we also have $K_r\rightarrow H$. Thus, if we could determine $s(K_r,K_n)$, then we could use Lemma~\ref{lemma:homdist} to calculate a bound on $FJ _n(H)$.

Let $\omega(G)$ denote the size of the largest clique in $G$ and
  $\chi(G)$ denote the chromatic number of $G$.
The Turán graph $T(n,r)$ is a graph formed by partitioning a set of $n$
vertices into $r$ subsets, with sizes as equal as possible, and connecting two
vertices whenever they belong to different subsets. Turán graphs have the following
properties \cite{turan:41}:

\begin{itemize}
\item
$e(T(n,r))=\lfloor \left(1-\frac{1}{r}\right) \cdot \frac{n^2}{2} \rfloor$; 

\item
$\omega(T(n,r))=\chi(T(n,r))=r$;

\item
if $G$ is a graph such that $e(G) > e(T(v(G),r))$, then $\omega(G) > r$.
\end{itemize}

\begin{lemma} \label{equality1}
  Let $r$ and $n$ be positive integers.
  Then,
  \[
  s(K_r,K_n) = e(T(n,r))/e(K_n) 
  \]
\end{lemma}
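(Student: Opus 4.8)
The plan is to prove the two inequalities $s(K_r,K_n) \leq e(T(n,r))/e(K_n)$ and $s(K_r,K_n) \geq e(T(n,r))/e(K_n)$ separately, using the machinery of Section~\ref{sec:s}. For the first, upper-bound direction, I would exhibit a single weighted instance that attains the ratio: since $K_n$ is edge-transitive, Lemma~\ref{lem:orbits} gives $s(K_r,K_n) = mc_{K_r}(K_n, 1/e(K_n))$, i.e. it suffices to compute the largest $K_r$-colourable subgraph of $K_n$ under uniform weights. A $K_r$-colourable subgraph of $K_n$ is exactly an $r$-partite subgraph; the maximum number of edges in an $r$-partite subgraph of $K_n$ is achieved by the balanced partition, which is precisely $e(T(n,r))$ by the first Tur\'an property listed. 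Hence $mc_{K_r}(K_n,1/e(K_n)) = e(T(n,r))/e(K_n)$, giving $s(K_r,K_n) = e(T(n,r))/e(K_n)$ directly from Lemma~\ref{lem:orbits}.

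In fact, once one invokes Lemma~\ref{lem:orbits} for the edge-transitive graph $K_n$, \emph{both} directions collapse into the single combinatorial claim that a maximum $r$-colourable subgraph of $K_n$ has exactly $e(T(n,r))$ edges. So the real content is: (i) an $r$-colourable (equivalently, $r$-partite) spanning subgraph of $K_n$ is obtained by choosing a partition of $V(K_n)$ into at most $r$ classes and keeping all edges between distinct classes; (ii) among all such partitions, the number of retained edges $\binom{n}{2} - \sum_i \binom{n_i}{2}$ is maximised when the part sizes $n_i$ are as equal as possible; and (iii) this maximum equals $e(T(n,r))$. Step (ii) is the standard convexity argument — moving a vertex from a larger class to a smaller one never decreases the count — and step (iii) is just the definition of the Tur\'an graph. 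I would state (i)--(iii) crisply and cite the Tur\'an properties already listed in the excerpt rather than reprove them.

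The one subtlety worth flagging — and the place I would be careful — is the reduction via Lemma~\ref{lem:orbits}: that lemma gives $s(M,N)$ as an infimum of $mc_M(N,w)$ over \emph{automorphism-symmetric} weight functions $w$ on $N$, and for edge-transitive $N$ the unique such normalised weight is the uniform one $1/e(N)$. I would make explicit that $K_n$ is indeed edge-transitive (its automorphism group is the full symmetric group $S_n$, which acts transitively on pairs), so the "in particular" clause of Lemma~\ref{lem:orbits} applies and $s(K_r,K_n) = mc_{K_r}(K_n, 1/e(K_n))$ with no residual infimum to take. After that, everything is the Tur\'an computation above and there is no remaining obstacle; the proof is short.

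\begin{proof}[Proof sketch]
Since $K_n$ is edge-transitive (its automorphism group $S_n$ acts transitively on pairs of vertices), Lemma~\ref{lem:orbits} yields $s(K_r,K_n) = mc_{K_r}(K_n, 1/e(K_n))$. A spanning subgraph $G'$ of $K_n$ satisfies $G' \rightarrow K_r$ if and only if $G'$ is $r$-partite, i.e. there is a partition of $V(K_n)$ into parts $V_1,\ldots,V_r$ with every edge of $G'$ joining two different parts; the maximum such $G'$ keeps \emph{all} edges between distinct parts, contributing $\binom{n}{2} - \sum_{i=1}^r \binom{|V_i|}{2}$ edges. By convexity of $x \mapsto \binom{x}{2}$, the sum $\sum_i \binom{|V_i|}{2}$ is minimised — hence the edge count maximised — when the part sizes differ by at most one, and the resulting graph is by definition the Tur\'an graph $T(n,r)$, with $e(T(n,r))$ edges. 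Therefore $mc_{K_r}(K_n, 1/e(K_n)) = e(T(n,r))/e(K_n)$, and the claim follows.
\qed
\end{proof}
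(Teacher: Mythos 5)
Your proof is correct and follows essentially the same route as the paper: both reduce via edge-transitivity of $K_n$ and Lemma~\ref{lem:orbits} to showing that a maximum $K_r$-colourable (i.e.\ $r$-partite) subgraph of $K_n$ has exactly $e(T(n,r))$ edges. The only cosmetic difference is that you re-derive the extremality of the balanced partition by the standard convexity argument, whereas the paper gets it by contradiction from the listed Tur\'an property that $e(G) > e(T(v(G),r))$ forces $\omega(G) > r$.
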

\begin{proof}
Since $K_n$ is edge-transitive, it suffices to show that
$\MC_r(K_n,1/e(K_n)) = e(T(n,r))/e(K_n)$.
Assume $\MC_r(K_n,1/e(K_n)) \cdot e(K_n) > e(T(n,r))$. This implies that
there exists an $r$-partite graph $G$ on $k$ vertices with strictly more than
$e(T(n,r))$ edges --- this is impossible since $\omega(G) > r$ and, consequently,
$\chi(G) > r$.
Thus, $\MC_{K_r}(K_n,1/e(K_n)) \cdot e(K_n) = e(T(n,r))$ because $T(n,r)$
is an $r$-partite subgraph of $K_n$. \qed
\end{proof}

Now, we are ready to prove the following:
\begin{proposition}
Let $v(H)=n$ and pick $r\in\mathbb{N}$, $\sigma\in\mathbb{R}$ such that \[\left \lfloor\left(1-\frac{1}{r}\right)\cdot \frac{n^2}{2} \right \rfloor  \leq \sigma\cdot n^2 = e(H)\leq \frac{n(n-1)}{2}.\]
Then,
\[ FJ_n(H) \geq \alpha _n \cdot \frac{2\left \lfloor\left(1-\frac{1}{r}\right)\cdot\frac{n^2}{2}\right \rfloor}{n\cdot(n-1)} \sim 1-\frac{1}{r}-\frac{1}{n}+\frac{2\ln n}{n(n-1)}\]
\[ \Ha (H) = 2\sigma+\frac{c}{n^2\log n}-\frac{2\sigma\cdot c}{n^2\log n}.\]
\end{proposition}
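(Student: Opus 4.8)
The plan is to derive both parts of the proposition directly from the groundwork laid in Section~\ref{partI}, using the homomorphic sandwich $K_r \rightarrow H \rightarrow K_n$ that holds whenever $\omega(H) \geq r$ and $v(H) = n$. The first step is to verify that the hypothesis $\lfloor(1-\frac{1}{r})\cdot\frac{n^2}{2}\rfloor \leq \sigma \cdot n^2 = e(H)$ indeed forces $\omega(H) \geq r$: this is exactly the third bullet in the list of Tur\'an graph properties, since $e(H) \geq e(T(n,r))$ implies $\omega(H) \geq r$ (with equality allowed, handled by noting $e(H) = e(T(n,r))$ already gives an $r$-clique by the extremal characterisation, or one weakens to $>$ and checks the boundary case separately). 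Having established $K_r \rightarrow H \rightarrow K_n$, I apply Lemma~\ref{lemma:homdist} to get $s(K_r, H) \geq s(K_r, K_n)$, and then Lemma~\ref{equality1} to evaluate $s(K_r, K_n) = e(T(n,r))/e(K_n) = \frac{2\lfloor(1-\frac{1}{r})\cdot\frac{n^2}{2}\rfloor}{n(n-1)}$.

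The second step combines this lower bound on $s(K_r, H)$ — equivalently on $1 - d(K_n, H)$, since $H \rightarrow K_n$ gives $1 - d(K_n,H) = s(H,K_n) \geq s(K_r,H)$, wait, more carefully: from $H \rightarrow K_n$ we have $1-d(K_n,H) = s(K_n,H)$ is not what we want; rather we want $1-d(H,K_n)$, and since $H \rightarrow K_n$, $s(H,K_n) = 1$, so instead we use that $K_r \rightarrow H$ yields $1-d(K_r,H) = s(K_r,H) \geq s(K_r,K_n)$ — with Theorem~\ref{alphak}, Frieze and Jerrum's guarantee $\alpha_n$ for $mc_n = \MC_{K_n}$. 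But the algorithm we want to transport is the one for $K_n$, applied via Lemma~\ref{approxresult} to $H$: since $\MC_{K_n}$ can be approximated within $\alpha_n$ and $H \rightarrow K_n$ gives $1-d(H,K_n) = s(K_n,H) \cdot s(H,K_n) = s(K_n,H)$, I need $s(K_n,H) \geq s(K_r,K_n)$; but $s(K_n,H) \geq s(K_n, K_n) = 1$ is false, so the correct chain is: Lemma~\ref{approxresult} transports approximability of $\MC_M$ to $\MC_N$ with factor $1-d(M,N)$, take $M = K_n$, $N = H$; then $1-d(K_n,H) = s(H,K_n)\cdot s(K_n,H)$, and since $H \rightarrow K_n$ we have $s(H,K_n)=1$, leaving $1-d(K_n,H) = s(K_n,H)$. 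Now $s(K_n,H) = \inf_{G,w} \MC_{K_n}(G,w)/\MC_H(G,w) \geq \inf_{G,w}\MC_{K_r}(G,w)/\MC_H(G,w) = s(K_r,H) \geq s(K_r,K_n)$, using Lemma~\ref{lem:ineq2} ($K_r \rightarrow K_n$) in the first inequality and Lemma~\ref{lemma:homdist} in the second. Hence $FJ_n(H) \geq \alpha_n \cdot s(K_r,K_n) = \alpha_n \cdot \frac{2\lfloor(1-\frac{1}{r})\cdot\frac{n^2}{2}\rfloor}{n(n-1)}$, and the asymptotic expansion follows by substituting $\alpha_n \sim 1 - \frac{1}{n} + \frac{2\ln n}{n^2}$ and simplifying the product.

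For the $\Ha(H)$ equality, I simply plug into Theorem~\ref{twocspapprox} with $d = v(H) = n$ and $t(H) = d^2 - 2e(H) = n^2 - 2\sigma n^2$, giving
\[
\Ha(H) = 1 - \frac{n^2 - 2\sigma n^2}{n^2}\cdot\left(1 - \frac{c}{n^2\log n}\right) = 1 - (1 - 2\sigma)\left(1 - \frac{c}{n^2\log n}\right),
\]
which expands to $2\sigma + \frac{c}{n^2\log n} - \frac{2\sigma c}{n^2\log n}$ after collecting terms. This is a routine algebraic identity with no subtlety.

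The only genuine obstacle is the $\omega(H) \geq r$ step and, relatedly, keeping the direction of all the $s$-inequalities and the sandwich $K_r \rightarrow H \rightarrow K_n$ straight — one must be careful that Lemma~\ref{lemma:homdist} is invoked with the middle graph in the correct slot and that the transported algorithm is the $K_n$-algorithm, not a $K_r$-algorithm. The floor function in $e(T(n,r))$ should be carried through verbatim rather than approximated until the final $\sim$ expansion, exactly as the statement does. Everything else is bookkeeping.
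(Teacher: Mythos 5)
Your overall route coincides with the paper's: sandwich $K_r \rightarrow H \rightarrow K_n$ via Tur\'an, evaluate $s(K_r,K_n)=e(T(n,r))/e(K_n)$ by Lemma~\ref{equality1}, transport Frieze and Jerrum's $\alpha_n$ through Lemma~\ref{approxresult}, and plug into Theorem~\ref{twocspapprox} for $\Ha(H)$; the Tur\'an/clique discussion and the $\Ha(H)$ computation are fine. However, the central step of your chain is wrong as written: from $H\rightarrow K_n$ you conclude $s(H,K_n)=1$. The criterion used in the proof of Lemma~\ref{metric} is that $s(M,N)=1$ if and only if $N\rightarrow M$, so $H\rightarrow K_n$ gives $s(K_n,H)=1$, \emph{not} $s(H,K_n)=1$; indeed $s(H,K_n)<1$ whenever $\omega(H)<n$ (take $G=K_n$ with uniform weights). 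Hence $1-d(K_n,H)=s(H,K_n)\cdot s(K_n,H)=s(H,K_n)$, i.e.\ the quantity you must bound from below is $s(H,K_n)$, whereas your chain $s(K_n,H)\geq s(K_r,H)\geq s(K_r,K_n)$ bounds the factor that is identically $1$ and is therefore vacuous. Nowhere do you establish $s(H,K_n)\geq s(K_r,K_n)$, which is exactly what $FJ_n(H)\geq \alpha_n\cdot s(K_r,K_n)$ requires, so the derivation of the first claim does not go through as stated.

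The repair is one line and is what the paper does implicitly: apply the \emph{second} inequality of Lemma~\ref{lemma:homdist} to $K_r\rightarrow H\rightarrow K_n$ to get $s(H,K_n)\geq s(K_r,K_n)$, equivalently $d(H,K_n)\leq d(K_r,K_n)=1-s(K_r,K_n)$, and then invoke Lemma~\ref{approxresult} with $M=K_n$, $N=H$. With that transposition corrected (you only ever invoke the first inequality of Lemma~\ref{lemma:homdist}, which bounds $s(K_r,H)$ and is not the one needed here), the remainder of your argument, including the justification that $e(H)\geq e(T(n,r))$ forces $\omega(H)\geq r$ and the routine expansion giving the stated formula for $\Ha(H)$, is correct and matches the paper's proof.
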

\begin{proof}
We have $K_r \rightarrow H$ due to Turán and $H \rightarrow K_n$ holds trivially since $v(H)=n$. 
By Lemma~\ref{equality1} \[ s(K_r,K_n) = \frac{2\left \lfloor\left(1-\frac{1}{r}\right)\cdot\frac{n^2}{2}\right \rfloor}{n\cdot(n-1)}. \] The first part of the result follows from Lemma~\ref{approxresult} since $d(H,K_n) \leq d(K_r,K_n)=1-s(K_r,K_n)$ and some straightforward calculations.
\[
\Ha (H)
= 1-\frac{n^2-\sigma\cdot n^2}{n^2}\cdot \left(1-\frac{c}{n^2\log n} \right) =
\]

\[
= \frac{c+2\sigma\cdot n^2\log n - 2\sigma \cdot c}{n^2\log n}
= \frac{c}{n^2\log n}+2\sigma-\frac{2\sigma \cdot c}{n^2\log n}.
\]
\qed
\end{proof}
Note that when $r$ and $n$ grow, $FJ_n(H)$ tends to $1$. This means that, asymptotically, we cannot do much better. If we compare the expression for $FJ_n(H)$ with the inapproximability bound for $\MC _n$ (Theorem~\ref{maxcuthard}), we see that all we could hope for is a
faster convergence towards $1$. As $\sigma$ satisfies $\left(1-\frac{1}{r}\right)\cdot\frac{1}{2} \leq \sigma \leq \left(1-\frac{1}{n}\right)\cdot\frac{1}{2}$, we conclude that $\Ha (H)$ also tends to $1$ as $r$ and $n$ grow. To get a better grip on how $\Ha (H)$ behaves we look at two extreme cases.

For a maximal $\sigma=\left(1-\frac{1}{r}\right)\cdot\frac{1}{2}$, $\Ha (H)$ becomes \[1-\frac{1}{n}+\frac{c}{n^3\log n}. \] On the other hand, this guarantee, for a minimal $\sigma=\left(1-\frac{1}{r}\right)\cdot\frac{1}{2}$ is \[1-\frac{1}{r}+\frac{c}{rn^2\log n}.\] At the same time, it is easy to see that Frieze and Jerrum's algorithm makes these points approximable within $\alpha _n$ (since, in this case, $H \equiv K_n$) and $\alpha _r$ (since Tur\'an's theorem tells us that $H\rightarrow K_r$ holds in this case), respectively. Our conclusion is that Frieze and Jerrum's and H\aa stad's algorithms perform almost equally well on these graphs asymptotically.
 
Another way to study dense graphs is via random graphs. Let ${\cal G}(n,p)$ denote the random graph on $n$ vertices in which every edge is chosen randomly and independently with probability $p=p(n)$. We say that ${\cal G}(n,p)$ has a property $A$ \emph{asymptotically almost surely} (a.a.s.) if the probability it satisfies $A$ tends to $1$ as $n$ tends to infinity. Here, we let $p=c$ for some $0<c<1.$

For $G\in {\cal G}(n,p)$ it is well known that a.a.s. $\omega (G)$ assumes one of at most two values around $\frac{2\ln n}{\ln (1/p)}$~\cite{bollobas:erdos:76,matula:72}. It is also known that, almost surely $\chi (G)\sim \frac{n}{2\ln (np)}\ln \left(\frac{1}{1-p}\right)$, as $np\rightarrow\infty$~\cite{bollobas:88,luczak:91}. Let us say that $\chi (G)$ is concentrated in width $s$ if there exists $u=u(n,p)$ such that a.a.s. $u\leq\chi (G)\leq u+s$. Alon and Krivelevich~\cite{alon:krivelevich:97} have shown that for every constant $\delta > 0$, if $p=n^{-1/2-\delta}$ then $\chi (G)$ is concentrated in width $s=1$. That is, almost surely, the chromatic number takes one of two values.

\begin{proposition}
Let $H\in {\cal G}(n,p)$. When $np \rightarrow \infty$, $FJ_m(H) \sim 1-\frac{2}{m}+\frac{2\ln m}{m^2}+\frac{1}{m^2}-\frac{2\ln m}{m^3}$, where $m=\omega (H)$. $\Ha (H) = p-\frac{p}{n}+(1-p)\cdot \frac{c}{n^2\log n}+\frac{pc}{n^3\log n}$.
\end{proposition}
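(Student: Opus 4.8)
The plan is to handle the two estimates separately, in each case reducing to tools already developed---principally Lemmas~\ref{approxresult}, \ref{lemma:homdist} and~\ref{equality1}---together with classical facts about ${\cal G}(n,p)$; recall that in this subsection $p=c$ is constant, so $np=cn\to\infty$ automatically.

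For the bound on $FJ_m(H)$, I would first observe that the two homomorphisms we need come for free: since $m=\omega(H)$ there is a copy of $K_m$ inside $H$, so $K_m\rightarrow H$, while $v(H)=n$ gives $H\rightarrow K_n$ trivially. Because $K_m\rightarrow H$, the identity $1-d(M,N)=s(M,N)$ valid whenever $M\rightarrow N$ (noted at the start of Section~\ref{sec:s}) gives $1-d(K_m,H)=s(K_m,H)$, and Lemma~\ref{lemma:homdist} applied to $K_m\rightarrow H\rightarrow K_n$ gives $s(K_m,H)\geq s(K_m,K_n)$; Lemma~\ref{equality1} then identifies the latter as $e(T(n,m))/e(K_n)=\tfrac{2\lfloor(1-1/m)n^2/2\rfloor}{n(n-1)}$. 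Since Frieze and Jerrum's algorithm for Max $m$-cut approximates $\MC_{K_m}$ within $\alpha_m$ (Theorem~\ref{alphak}), Lemma~\ref{approxresult} yields
\[
FJ_m(H)\ \geq\ \alpha_m\cdot\bigl(1-d(K_m,H)\bigr)\ \geq\ \alpha_m\cdot\frac{2\left\lfloor\left(1-\frac1m\right)\frac{n^2}{2}\right\rfloor}{n(n-1)}.
\]
Everything so far is deterministic; the random graph is used only to pin down the growth of $m=\omega(H)$, which a.a.s.\ is of order $2\ln n/\ln(1/c)$, so $m\to\infty$ while staying $o(n)$ (indeed $m=\Theta(\ln n)$). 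Since $m$ is only logarithmic in $n$, the floor and the factor $n/(n-1)$ change $\tfrac{2\lfloor(1-1/m)n^2/2\rfloor}{n(n-1)}$ by $O(1/n)$, which decays faster than any power of $1/m$; hence this expression is asymptotically $1-\tfrac1m$, and after multiplying by $\alpha_m\sim1-\tfrac1m+\tfrac{2\ln m}{m^2}$ and collecting terms I would obtain $FJ_m(H)\sim1-\tfrac2m+\tfrac{2\ln m}{m^2}+\tfrac1{m^2}-\tfrac{2\ln m}{m^3}$ (the lower bound tends to $1$ and $FJ_m(H)\leq1$, so the two are asymptotically equivalent).

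For $\Ha(H)$, the only probabilistic ingredient is the concentration of the edge count: a.a.s.\ $e(H)=(1+o(1))\,p\binom n2$, and substituting the typical value $e(H)=p\binom n2=\tfrac{pn(n-1)}{2}$ gives $t(H)=v(H)^2-2e(H)=n^2-pn(n-1)=(1-p)n^2+pn$. Putting $d=v(H)=n$ into H\aa stad's guarantee (Theorem~\ref{twocspapprox}),
\[
\Ha(H)\ =\ 1-\frac{t(H)}{n^2}\Bigl(1-\frac{c}{n^2\log n}\Bigr)\ =\ 1-\Bigl(1-p+\frac pn\Bigr)\Bigl(1-\frac{c}{n^2\log n}\Bigr),
\]
and expanding the product gives $\Ha(H)=p-\tfrac pn+(1-p)\tfrac{c}{n^2\log n}+\tfrac{pc}{n^3\log n}$, exactly as claimed.

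I expect no single hard step; the work is almost entirely bookkeeping. The point that needs care is the simultaneous asymptotics in $m$ and $n$: in the $FJ_m$ statement the ``$\sim$'' must be read as the asymptotics \emph{in $m$} of the guarantee $\alpha_m\,s(K_m,K_n)$ that is actually established, which is legitimate precisely because $m$ is exponentially smaller than $n$, so the $n$-dependent corrections (the floor, the $n/(n-1)$ factor, and the fluctuation of $\omega(H)$ inside its two-point window) are of smaller order than every displayed term; similarly, in the $\Ha$ statement the ``$=$'' is the a.a.s.\ leading-order value obtained by replacing $e(H)$ with its expectation.
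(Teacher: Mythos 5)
Your proof is correct and takes essentially the same route as the paper's: sandwich $H$ between $K_m$ and a complete graph, combine Lemmas~\ref{lemma:homdist} and~\ref{equality1} with Lemma~\ref{approxresult} to get the $FJ_m$ bound, and substitute the expected edge count $\binom{n}{2}p$ into H\aa stad's guarantee for the second part. The only (immaterial) difference is that you use the trivial target $K_n$ where the paper uses $K_{\chi(H)}$; since both $\chi(H)$ and $n$ are far larger than $m=\Theta(\ln n)$, the factor $k/(k-1)$ tends to $1$ either way and the asymptotics coincide.
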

\begin{proof}
Let $k=\chi (H)$.

\[
FJ_m(H) \geq \alpha _m\cdot s(K_m,K_k) \sim \left(1-\frac{1}{m}+\frac{2\ln m}{m^2}\right)\cdot\frac{2\left \lfloor\left(1-\frac{1}{m}\right)\cdot\frac{k^2}{2}\right \rfloor}{k(k-1)}
\sim 
\]

\[
\sim \frac{(m^2-m+2\ln m)(m-1)k}{m^3(k-1)} =
\]

\[
= \frac{k}{k-1}-\frac{2k}{m(k-1)}+\frac{k}{m^2(k-1)}+\frac{2k\ln m}{m^2(k-1)}-\frac{2k\ln m}{m^3(k-1)} \ (**)
\]

\noindent
If $n$ is large, then $k \gg m$ and

\[
(**) \sim 1-\frac{2}{m}+\frac{2\ln m}{m^2}+\frac{1}{m^2}-\frac{2\ln m}{m^3}.
\] 

\noindent
The expected number of edges for a graph $H\in {\cal G}(n,p)$ is $\binom{n}{2}p$, so
\[
\Ha (H)
= 1-\frac{t(G)}{d^2}\cdot (1-\frac{c}{d^2\log d})=/d=n, e(G)=\binom{n}{2}p/ =
\]

\[
= 1-\frac{n^2-(n^2-n)p}{n^2}\cdot (1-\frac{c}{n^2\log n})
=  1-\frac{n-pn+p}{n}\cdot (1-\frac{c}{n^2\log n}) =
\]

\[
= 1-(1-p+\frac{p}{n})\cdot (1-\frac{c}{n^2\log n}) =
\]

\[
=  \frac{pn^3\log n + nc - pnc - pn^2\log n + pc}{n^3 \log n} =
\]

\[
= p-\frac{p}{n}+(1-p)\cdot \frac{c}{n^2\log n}+\frac{pc}{n^3\log n}
\]
\qed
\end{proof}
We see that, in the limiting case, $\Ha (H)$ tends to $p$, while $FJ_m(H)$ tends to $1$. Again, this means that, for large enough graphs, we cannot do much better. With a better analysis, one could possibly reach an expression for $FJ_m(H)$ that has a faster convergence rate.

Of course, it is interesting to look at what happens for graphs $H\in {\cal G}(n,p)$ where $np$ does not tend to $\infty$ when $n\rightarrow\infty$. The following theorem lets us do this.
\begin{theorem}[Erd\H{o}s and R\'enyi~\cite{erdos:renyi:60}]
Let $c$ be a positive constant and $p=\frac{c}{n}$. If $c<1$, then a.a.s. no component in ${\cal G}(n,p)$ contains more than one cycle, and no component has more than $\frac{\ln n}{c-1-\ln c}$ vertices.
\end{theorem}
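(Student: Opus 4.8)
This statement describes the subcritical ($c<1$) phase of the Erd\H{o}s--R\'enyi random graph, and the plan is to prove its two assertions separately. Throughout I would write $g(c)=c-1-\ln c$; this is strictly positive for every $c\neq 1$ (it is $e^{t}-1-t$ with $t=\ln c$, which is positive off $0$ by convexity of the exponential), and it satisfies the identity $\ln(c\,e^{1-c})=-g(c)$, which is precisely where the constant $1/g(c)=1/(c-1-\ln c)$ in the theorem comes from.

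For the acyclicity assertion, I would show that a.a.s.\ ${\cal G}(n,p)$ contains no subgraph from any of the three ``basic'' families of graphs of cycle rank two: a \emph{theta graph} (two vertices joined by three internally disjoint paths), a \emph{figure-eight} (two cycles meeting in exactly one vertex), and a \emph{dumbbell} (two vertex-disjoint cycles joined by a path). An elementary case analysis --- take a spanning tree of a component, pick two non-tree edges, look at the two fundamental cycles $D_{1},D_{2}$ they close, and split on whether $V(D_{1})\cap V(D_{2})$ is empty, a single vertex, or has size at least $2$ --- shows that any connected graph of cycle rank at least two contains a member of one of these families, so ruling all three out is exactly the assertion that every component is a tree or is unicyclic. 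A copy of a fixed such graph on $k$ vertices has exactly $k+1$ edges and is determined by choosing a bounded number of ``branch'' vertices and then distributing the remaining vertices among the connecting paths; hence ${\cal G}(n,p)$ contains at most $n^{k}$ potential copies of a given shape with $k$ vertices, each present with probability $p^{k+1}$, so the expected number present is at most $n^{k}p^{k+1}=(np)^{k}p=c^{k+1}/n$. Summing $c^{k+1}/n$ over all shapes --- equivalently over the lengths of the constituent paths, a convergent geometric-type series because $c<1$ --- gives an expected total of $O(1/n)=o(1)$, and Markov's inequality finishes this part.

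For the component-size bound, I would use the breadth-first exploration of the component of a fixed vertex $v$: the number of vertices it discovers is stochastically dominated by the total progeny $Z$ of a Galton--Watson branching process with $\mathrm{Bin}(n-1,p)$ offspring, a subcritical process of mean $c(1-1/n)<1$. By the Otter--Dwass formula and Stirling, $\Pr[Z\ge k]\le C\,k^{-3/2}e^{-g(c)k}$ for an absolute constant $C$ (this is, up to lower-order factors, the tail of the Borel distribution $\Pr[Z=k]=(ck)^{k-1}e^{-ck}/k!$), the exponential rate $g(c)=-\ln(ce^{1-c})$ being the Cram\'er rate of this process. A union bound over the $n$ choices of $v$ then gives $\Pr[\text{some component has }\ge k\text{ vertices}]\le nC\,k^{-3/2}e^{-g(c)k}$, and at $k=\ln n/g(c)=\ln n/(c-1-\ln c)$ this equals $C\,(g(c)/\ln n)^{3/2}\to 0$, which is exactly the claimed bound. (A first-moment count of the components of size $k$, contributing at most $\binom{n}{k}k^{k-2}p^{k-1}(1-p)^{k(n-k)}$ by Cayley's formula and a union bound over spanning trees, reproduces the same bound for $k$ of order up to $n^{1/3}$ after Stirling; the exploration argument is cleaner for the remaining, linear-size, range.)

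I expect the counting estimates themselves to be routine; the one point requiring care is obtaining the \emph{exact} constant $1/(c-1-\ln c)$ rather than merely ``some constant times $\ln n$''. A plain Chernoff bound on the offspring sums yields an exponential tail for $Z$ with a suboptimal rate; the sharp rate $g(c)$ needs the precise large-deviation asymptotics of the Borel distribution, equivalently, tracking the Stirling approximation accurately enough to extract the identity $\ln(ce^{1-c})=-(c-1-\ln c)$. Once that identity is in hand, both assertions follow as sketched above.
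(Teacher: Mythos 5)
This theorem is imported verbatim from Erd\H{o}s and R\'enyi; the paper offers no proof of it, so there is nothing internal to compare against, and your sketch has to be judged on its own. It is essentially sound, and it follows the now-standard route rather than the original one: Erd\H{o}s and R\'enyi worked in the uniform model and proved both claims by direct first--moment counting (Cayley-type enumeration of trees and of connected graphs with $k$ vertices and $k$ or $k+1$ edges), whereas you prove acyclicity-beyond-one-cycle by excluding the three minimal cycle-rank-two shapes and the size bound by stochastic domination with a subcritical Galton--Watson process plus Otter--Dwass. Both halves of your argument check out: each of the theta, figure-eight and dumbbell shapes on $k$ vertices has exactly $k+1$ edges and there are only polynomially many shapes per $k$, so the expected count is $\sum_k k^{O(1)} n^k p^{k+1} = O(1/n)$; and the $k^{-3/2}$ prefactor in the Borel tail is exactly what turns the union bound $n\Pr[Z\ge k_0]$ into $O\bigl((\log n)^{-3/2}\bigr)$ at $k_0=\ln n/(c-1-\ln c)$, which is why the theorem can be stated with that threshold and no $(1+o(1))$ slack. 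You correctly identify the one delicate point, namely that the offspring law is $\mathrm{Bin}(n-1,c/n)$ rather than $\mathrm{Poisson}(c)$; for the record this causes no loss, since $\Pr[\mathrm{Bin}(k(n-1),p)=k-1]\le e^{O(k/n)}\cdot\frac{(ck)^{k-1}e^{-ck}}{(k-1)!}$ and $k=O(\log n)$ makes the correction factor $1+o(1)$, so the exact rate $g(c)=c-1-\ln c$ survives. The counting proof buys you statements about the number of components of each size (which Erd\H{o}s and R\'enyi also needed); the branching-process proof buys a cleaner and sharper tail bound for the largest component, which is all that is asserted here.
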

Now we see that if $np\rightarrow \epsilon$ when $n\rightarrow\infty$ and $0<\epsilon<1$, then ${\cal G}(n,p)$ almost surely consists of components with at most one cycle. Thus, each component resembles a cycle where, possibly, trees are attached to certain cycle vertices, and each component is homomorphically equivalent to the cycle it contains. Since we know from Section~\ref{subsec:sparse} that Frieze and Jerrum's algorithm performs better than H\aa stads algorithm on cycle graphs, it follows that the same relationship 
holds in this part of the ${\cal G}(n,p)$ spectrum.

\section{Conclusions and Open Problems} \label{openproblems}

We have seen that applying Frieze and Jerrum's algorithm to $\MCol{H}$ gives 
comparable to or better results than when applying H\aa stad's {\sc Max 2-Csp} 
algorithm for the classes of graphs we have considered. One possible explanation 
for this is that the analysis of the {\sc Max 2-Csp} algorithm only aims to 
prove it better than a random solution on expectation, which may leave room for 
strengthenings of the approximation guarantee. At the same time, we are 
probably overestimating the distance between the graphs. It is likely that
both results can be improved.

Kaporis et al.~\cite{kaporis:etal:2006} have shown that $\MC _2$ is 
approximable within $.952$ for any given average degree $d$ and asymptotically 
almost all random graphs $G$ in ${\cal 
G}(n,m=\left\lfloor\frac{d}{2}n\right\rfloor)$, where ${\cal G}(n,m)$ is the 
probability space of random graphs on $n$ vertices and $m$ edges selected 
uniformly at random. In a similar vein, Coja-Oghlan et al.~\cite{coja:etal:2005} 
give an algorithm that approximates $\MC _k$ within $1-O(1/\sqrt{np})$ in 
expected polynomial time, for graphs from ${\cal G}(n,p)$. It would be 
interesting to know if these results could be carried further, to other graphs 
$G$, so that better approximability bounds on $\MCol{H}$, for $H$ such that 
$G\rightarrow H$, could be achieved.

Erd\H{o}s \cite{erdos:59} has proved that for any positive
integers $k$ and $l$
there exists a graph of chromatic number $k$ and girth at least $l$.
It is obvious that such graphs cannot be sandwiched between
$K_2$ and a cycle as was the case of the graphs of high girth in
Section~\ref{subsec:sparse}.
A different idea is thus required to deal with these graphs.
In general, to apply our method more precisely, we need a better
understanding of the structure of ${\cal C}_S$ and how this
interacts with our metric $d$.

The idea of defining a metric on a space of problems which relates
their approximability can be extended to more general cases.
It should not prove too difficult to generalise the framework introduced
in this paper to {\sc Max CSP} over directed graphs or even
languages consisting of a single, finitary relation.
How far can this generalisation be carried out?
Could it provide any insight into the approximability of 
{\sc Max CSP} on arbitrary constraint languages?

When considering inapproximability, we have strongly relied on 
the Unique Games Conjecture---hence, we are part of the growing body interested
in seeing UGC settled. We note, though, that
weaker inapproximability results exist for both {\sc Max 
cut}~\cite{hastad:2001} and {\sc Max $k$-cut}~\cite{kann:etal:97} and that they are
applicable in our setting. We want to emphasise that our method is not 
{\em per se} dependent on the truth of the UGC.

\bibliography{maxhcol_long,/home/chrba/bib/strings,/home/chrba/bib/abstracts,/home/chrba/bib/papers,/home/chrba/bib/books,/home/chrba/bib/xref,/home/petej/bib/extrapapers}
\end{document}